\providecommand{\U}[1]{\protect\rule{.1in}{.1in}}
\newtheorem{theorem}{Theorem}[section]
\newtheorem{definition}{Definition}[section]
\newtheorem{example}{Example}[section]
\newtheorem{idea memo}{Idea Memo}[section]
\newtheorem{proposition}{Proposition}[section]
\newcommand{\beq}{\begin{equation}}
\newcommand{\eeq}{\end{equation}}
  \newcommand{\beql}[1]{\begin{equation}\label{eq:#1}}
  \newcommand{\beqa}{\begin{eqnarray}}
  \newcommand{\eeqa}{\end{eqnarray}}
  \newcommand{\beqas}{\begin{eqnarray*}}
  \newcommand{\eeqas}{\end{eqnarray*}}
  \newcommand{\ep}{\varepsilon}
\newcommand{\eq}[1]{(\ref{eq:#1})}
\begin{document}

\title{Universally valid uncertainty relations in general quantum systems}
\author{Kazuya Okamura\thanks{e-mail: okamura@math.cm.is.nagoya-u.ac.jp}$\;$
 and Masanao Ozawa\thanks{e-mail: ozawa@is.nagoya-u.ac.jp}
\vspace{2mm}\\
Graduate School of Informatics, Nagoya University,\\
Chikusa-ku, Nagoya 464-8601, Japan }
\date{}
\maketitle
\begin{abstract}
We study universally valid uncertainty relations in general quantum systems described by
general $\sigma$-finite von Neumann algebras to foster
developing quantitative analysis in quantum systems with infinite degrees of freedom
such as quantum fields. 
We obtain the most stringent measurement-disturbance relation ever, 
applicable to systems with infinite degrees of freedom, by refining the proofs 
given by Branciard and one of the authors (MO) for systems with
finite degrees of freedom. 
In our proof the theory of the standard form of von Neumann algebras plays 
a crucial role, incorporating with measurement theory for local quantum systems
recently developed by the authors.
\end{abstract}

\section{Introduction}\label{1}
Mathematical formalism of quantum theory introduced non-commutativity of 
observables. Hei\-sen\-berg's uncertainty relation elucidates an operational meaning of the non-commutativity 
as a limitation to the simultaneous measurability of a pair of observables.
In 1927, using the famous $\gamma$-ray microscope thought experiment, 
Heisenberg  \cite{Hei27} claimed that canonically conjugate observables $Q, P$ 
can be measured simultaneously only with the relation
\beql{Hei27}
\ep(Q)\ep(P)\ge\frac{\hbar}{2}
\eeq
for the ``mean errors''  $\ep(Q), \ep(P)$.  
However, his formal derivation of this relation from the well-established relation
\begin{equation}
\sigma(Q)\sigma(P)\ge\frac{\hbar}{2} \label{Ken27}
\end{equation}
for the standard deviations  $\sigma(Q), \sigma(P)$ due to Heisenberg \cite{Hei27} and Kennard \cite{Ken27}
needs an additional assumption such as a quantitative version of the repeatability hypothesis \cite{15A2}.
Although the repeatability hypothesis was commonly accepted at that time 
(cf.~Schr\"{o}dinger \cite[Section 8]{Sch35}, Dirac \cite[p.~36]{Dir58}, and von Neumann \cite[p.~335]{vN32E}),
this hypothesis has been completely abandoned in the modern quantum mechanics \cite{DL70},
in which quantum measurements are generally described by completely positive instruments \cite{84QC}. 
In such a general description of quantum measurements, Heisenberg's relation \eq{Hei27} 
loses its universal validity \cite{88MS,02KB5E}.
An alternative relation universally valid for arbitrary measurements, 
arbitrary pairs of observables, and arbitrary states was derived only recently by 
one of the authors \cite{03HUR,03UVR,03UPQ,04URJ,04URN,05UUP},
 and has recently received considerable attention. 
The validity of the above new relation, as well as a stronger version of this relation 
\cite{Bra13,Bra14,14EDR}, were experimentally tested with neutrons 
 \cite{12EDU,13VHE,16A3}
and with photons
 \cite{RDMHSS12,13EVR,WHPWP13,RBBFBW14,14ETE}.
Other approaches generalizing Heisenberg's original relation
can be found, for example, in 
\cite{App98c,Hal04,Wer04,BLW13,BLW14PRA,BLW14JMP,%
BLW14RMP,LYFO14},
apart from the information theoretical approach  \cite{14NDQ,15A1}.

All the above studies of universally valid uncertainty relations have been restricted to 
quantum systems with finite degrees of freedom. 
In the present paper, we show universally valid uncertainty relations for quantum systems 
described by general $\sigma$-finite von Neumann algebras including systems 
with infinite degrees of freedom, based on quantum measurement theory 
for systems with infinite degrees of freedom recently established 
by the present authors \cite{16A1}.
The authors aim to clarify the mathematical essence of derivations of 
universally valid uncertainty relations and to give a simpler proof than ever before.
The universally valid uncertainty relations derived by Branciard \cite{12EDU,13VHE} 
for pure states and by one of the authors \cite{14EDR} for mixed states 
considerably strengthened Ozawa's original relation \cite{03UVR,03UPQ} 
and is considered as the strongest relation ever. 
Their generalizations to general $\sigma$-finite von Neumann algebras of course match 
our aim.
To achieve this purpose the theory of the standard form of von Neumann algebras plays 
a crucial role.
The bound, denoted by $D_{AB}$ below, in uncertainty relations
is written in the language of the theory.
We would like to emphasize that  
the theory of the standard form of von Neumann algebras can be
a more powerful tool than ever for quantitative analysis in quantum information theory. 
This study may be counted as an instance supporting the opinion that ``the fields of operator 
algebras and quantum information can benefit each other in multiple ways"  \cite[ll.~6--7]{KGR16}.

We adopt the (von Neumann) algebraic formulation of quantum theory herein.
Observables of a physical system are described
by self-adjoint operators affiliated with a von Neumann algebra,
and physical situations and experimental settings of the system
are described by normal states on the von Neumann algebra.
Let $\mathcal{M}$ be a von Neumann algebra on a Hilbert space $\mathcal{H}$
and $(S,\mathcal{F})$ a measurable space.
We assume that von Neumann algebras are $\sigma$-finite in the present paper.
Let $\mathcal{M}_{s.a.}$ denote the set of self-adjoint elements of $\mathcal{M}$.
Let $\mathcal{M}_\ast$ denote the predual of $\mathcal{M}$,
let $\mathcal{M}_{\ast,+}$ denote the set of positive elements of $\mathcal{M}_\ast$,
and $\mathcal{S}_n(\mathcal{M})$ the set of normal states on $\mathcal{M}$.
For every $\rho\in\mathcal{M}_\ast$ and $M\in\mathcal{M}$, let
$\langle \rho,M\rangle$ denote the pairing of $\mathcal{M}_\ast$ and $\mathcal{M}$, i.e.,
$\langle \rho,M\rangle=\rho(M)$.

In quantum mechanics, the observables are described by the von Neumann algebra 
$\mathcal{M}=\textrm{\boldmath $B$}(\mathcal{H})$,
the set of bounded linear operators on a Hilbert space $\mathcal{H}$, 
and the set of states, described by the density operators on $\mathcal{H}$, corresponds 
to the set $\mathcal{S}_n(\mathcal{M})$ of normal states on $\textrm{\boldmath $B$}(\mathcal{H})$.
In the modern quantum mechanics having abandoned 
the repeatability hypothesis,
the Heisenberg type uncertainty relation
\begin{equation}
\varepsilon(A)\ep(B)\geq C_{AB}
\end{equation}
for an arbitrary pair of observables $A,B$ is known to hold only for a limited class of measurements,
for instance, for jointly unbiased joint measurements \cite{AK65,AG88,Ish91,91QU}, 
where $\varepsilon$ denotes the measurement error (to be defined in Section 3)
and $C_{AB}=C_{A,B,\rho}$ is defined by
\begin{equation}
 C_{A,B,\rho}=\frac{1}{2}|\langle\rho,-i[A,B]\rangle|.
\end{equation}
In 2003, one of the authors \cite{03UVR,03UPQ} derived a universally valid uncertainty relation
\begin{equation}\label{Ozawa}
\varepsilon(A)\ep(B)+\varepsilon(A)\sigma(B)+\sigma(A)\ep(B)\geq C_{AB},
\end{equation}
where $\sigma(A)=\sigma(A;\rho)$ is
the standard deviation of an observable $A$ in a normal state $\rho$ defined by
\begin{equation}
\sigma(A;\rho)=(\rho(A^2)-\rho(A)^2)^{\frac{1}{2}}=\langle\rho,(A-\rho(A))^2\rangle^{\frac{1}{2}}.
\end{equation}
In 2013, Branciard \cite{Bra13,Bra14} strengthened the above relation as
\begin{equation}\label{eq:Branc}
\varepsilon(A)^2\sigma(B)^2+\sigma(A)^2\ep(B)^2+2\varepsilon(A)\ep(B)
\sqrt{\sigma(A)^2\sigma(B)^2-C_{AB}^2}
\geq C_{AB}^2.
\end{equation}
Subsequently, this relation \eq{Branc} was further improved by one of the authors
\cite{14EDR} replacing the lower bound
$C_{AB}$ by a more stringent one $D_{AB}$ as 
\begin{equation}\label{Ozawa2}
\varepsilon(A)^2\sigma(B)^2+\sigma(A)^2\ep(B)^2+2\varepsilon(A)\ep(B)
\sqrt{\sigma(A)^2\sigma(B)^2-D_{AB}^2}
\geq D_{AB}^2,
\end{equation}
where $D_{AB}=D_{A,B,\rho}$ is defined by
\begin{equation}
D_{A,B,\rho}= \frac{1}{2}\mathrm{Tr}[|\sqrt{\tilde{\rho}}(-i[A,B])\sqrt{\tilde{\rho}}|],
\end{equation}
and $\tilde{\rho}$ is a (unique) density operator on $\mathcal{H}$ such that $\rho(M)=\mathrm{Tr}[M \tilde{\rho}]$
for all $M\in\textrm{\boldmath $B$}(\mathcal{H})$.

The fact that the bound $D_{AB}$ 
depends on the choice of the observable algebra relevant to the measuring
interaction is important.
This is easily seen by the example below.
\begin{example}[]
Let $\mathcal{N}=\mathrm{M}_2(\mathbb{C})\otimes\mathrm{M}_2(\mathbb{C})\cong\mathrm{M}_4(\mathbb{C})$
be a von Neumann algebra as the observable algebra,
 $\sigma_x$, $\sigma_y$ and $\sigma_z$ the Pauli matrices
and $\omega_\psi=\langle \psi| (\cdot)\psi \rangle$ a normal state on
$\mathrm{M}_2(\mathbb{C})\otimes\mathrm{M}_2(\mathbb{C})\cong\mathrm{M}_4(\mathbb{C})$,
where $\psi$ is defined by
\begin{equation}
\psi=\sqrt{\dfrac{1}{2}}(e_{z\uparrow}\otimes e_{z\downarrow}- e_{z\downarrow}\otimes e_{z\uparrow}),
\end{equation}
and $e_{z\uparrow}$ and $e_{z\downarrow}$ are eigenvectors of $\sigma_z$
corresponding to eigenvalues $+1$ and $-1$, respectively,
i.e., $\sigma_z e_{z\uparrow}= e_{z\uparrow}$, $\sigma_z e_{z\downarrow}=- e_{z\downarrow}$.
Here, we set
$\mathcal{M}=\mathrm{M}_2(\mathbb{C})\otimes\mathbb{C}1\cong\mathrm{M}_2(\mathbb{C})$,
which is of course a von Neumann subalgebra of
$\mathcal{N}=\mathrm{M}_2(\mathbb{C})\otimes\mathrm{M}_2(\mathbb{C})\cong\mathrm{M}_4(\mathbb{C})$.
Then the restriction $\omega_\psi|_{\mathrm{M}_2(\mathbb{C})}$ of $\omega_\psi$
to $\mathrm{M}_2(\mathbb{C})$ satisfies
\begin{equation}
\omega_\psi|_{\mathrm{M}_2(\mathbb{C})}(M)=
\langle \psi|(M\otimes 1)\psi \rangle=\dfrac{1}{2}\mathrm{Tr}[M]
\end{equation}
for all $M\in \mathrm{M}_2(\mathbb{C})$. 
This is an example that every normal state on a von Neumann algebra is defined by a vector state
on a larger von Neumann algebra than the original one.
Then we have
\begin{align}
D_{\sigma_x\otimes 1,\sigma_y\otimes 1,\omega_\psi}&=
C_{\sigma_x\otimes 1,\sigma_y\otimes 1,\omega_\psi}=
\dfrac{1}{2}|\langle \psi|[\sigma_x\otimes 1,\sigma_y\otimes 1]\psi \rangle|
=\dfrac{1}{2}|\langle \psi|2i(\sigma_z\otimes 1)\psi \rangle|=0, \\
D_{\sigma_x,\sigma_y,\frac{1}{2}\mathrm{Tr}}&=\dfrac{1}{2}\mathrm{Tr}
\left| \sqrt{\dfrac{1}{2}}[\sigma_x,\sigma_y]\sqrt{\dfrac{1}{2}}\right|
=\dfrac{1}{2}\mathrm{Tr}
\left| \dfrac{1}{2} 2i\sigma_z\right|=\dfrac{1}{2}\mathrm{Tr}[1]=1.
\end{align}
Therefore, $D_{\sigma_x\otimes 1,\sigma_y\otimes 1,\omega_\psi}< D_{\sigma_x,\sigma_y,\frac{1}{2}\mathrm{Tr}}$.
That is, when we consider the system as a range where the observable algebra is $\mathcal{N}$, 
the bound value is $0$; however, when only observables contained in $\mathcal{M}$
are considered, the bound value is $1$.
\end{example}
Generally, it holds that the bound becomes smaller
as we consider systems with higher degrees of freedom.
This is because the class of physically admissible measurements is the wider
the more observables can be involved in the interaction Hamiltonian for the 
measurement.
Therefore, the designation of the observable algebra (of the system to be measured)
substantially contributes to the determination of the bound.
We would like to emphasize that the use of $D_{AB}$ is firmly valuable.
For more detailed discussions we refer the reader to \cite{14EDR,16A3}.

\sloppy
In this paper, we extend Eq.(\ref{Ozawa2}) to the setting of $\sigma$-finite 
von Neumann algebras.
Let $(\mathcal{M},\mathcal{H},\mathcal{P},J)$ be a standard form
with a $\sigma$-finite von Neumann algebra $\mathcal{M}$, i.e.,
$(\mathcal{M},\mathcal{H},\mathcal{P},J)$ is a quadruple consisting of
a Hilbert space $\mathcal{H}$,
a $\sigma$-finite von Neumann algebra $\mathcal{M}$ on $\mathcal{H}$,
a self-dual cone $\mathcal{P}$ of $\mathcal{H}$ 
and the modular conjugation $J$ of $\mathcal{M}$.
We refer the reader to Section \ref{2} (and \cite{BR02,Haa75,Tak02})
for the theory of the standard form of von Neumann algebras.
Let $A$ and $B$ be elements of $\mathcal{M}_{s.a.}$ 
and $\rho$ a normal state on $\mathcal{M}$.
We define a normal functional $\omega_{A,B,\rho}$ on $\mathcal{M}$ by
\begin{equation}
\omega_{A,B,\rho}(M)=\langle \xi_\rho|M J(-i[A,B])J\xi_\rho\rangle
\end{equation}
for all $M\in\mathcal{M}$, where $\xi_\rho$ is a unique unit vector of $\mathcal{P}$ such that
$\langle \rho,M\rangle=\langle \xi_\rho|M \xi_\rho\rangle$
for all $M\in\mathcal{M}$. Here we redefine $D_{A,B,\rho}$ by
\begin{equation}
 D_{A,B,\rho}= \frac{1}{2}\Vert \omega_{A,B,\rho}\Vert.
\end{equation}
This coincides with the original one in the case of $\mathcal{M}=\textrm{\boldmath $B$}(\mathcal{H})$.
It is then obvious that
\begin{equation}
D_{A,B,\rho}=\frac{1}{2}\Vert \omega_{A,B,\rho}\Vert\geq \frac{1}{2}|\langle \omega_{A,B,\rho},1 \rangle|
=\frac{1}{2}|\langle \xi_\rho|J(-i[A,B])J\xi_\rho\rangle|=C_{A,B,\rho}.
\end{equation}
Our purpose is to derive Eq.(\ref{Ozawa2}) in terms of $D_{A,B,\rho}$ redefined above
in the setting  of general $\sigma$-finite von Neumann algebras.
For the original case of $\mathcal{M}=\textrm{\boldmath $B$}(\mathcal{H})$,
Ozawa \cite{14EDR} presented two proofs of the derivation of Eq.(\ref{Ozawa2}):
One is based on the method, called the ``canonical purification", in terms of
the dual Hilbert space $\mathcal{H}^\ast$ of $\mathcal{H}$,
and the other is based on the representation of $\textrm{\boldmath $B$}(\mathcal{H})$
on the Hilbert space of Hilbert-Schmidt class operators on $\mathcal{H}$.
Our proof herein is a natural unification of those methods
via the theory of the standard form of von Neumann algebras
and can also be applied to any measurements described
by CP instruments which cannot be realized by any measuring processes.

In Section \ref{2}, we introduce quantum measurement theory 
for quantum systems described by general $\sigma$-finite von Neumann algebras.
In Section \ref{3}, we define the error and the disturbance of measurements 
used in this paper.
In Section \ref{4}, we show the main theorem of the paper, that is,
a universally valid uncertainty relation for measurement error and disturbance
in general quantum systems described by $\sigma$-finite von Neumann algebras.
In Section \ref{5}, we also show a universally valid uncertainty relation 
for simultaneous measurements.
In Section \ref{6}, we conclude the paper with some remarks.

\section{Preliminaries on Quantum Measurement}\label{2}

Here we introduce quantum measurement theory based on completely positive (CP) instruments defined on
general ($\sigma$-finite) von Neumann algebras,
which enables us to describe processes of measurement in
quantum systems with infinite degrees of freedom, expecially, in quantum fields.
The previous investigation \cite{16A1} by the authors
much contributes to the development of the theory and it mathematics.
Our attempt herein, the establishment of universally valid uncertainty relations
in general quantum systems, is its succeeding program.
Extending the scope of application of uncertain relations to quantum fields
is essential for developing both foundations of quantum theory
and quantitative analysis in quantum field theory.
In particular, many physicists, inspired by quantum information theory,
are recently very interested in the latter.
Therefore, our study has potential demand in physics and is not just mathematical concern.
This section provides us with preliminaries on recent quantum measurement theory enough to understand
the physical setting and mathematical proof of universally valid uncertainty relations.
We refer the reader to \cite{04URN,14MFQ,16A1} 
for detailed expositions of quantum measurement theory
based on CP instruments and measuring processes.

First we shall define the concept of CP instrument
describing output probabilities and dynamical changes of states caused by physically realizable measurement.
Let $\mathcal{M}$ be a von Neumann algebra on a Hilbert space $\mathcal{H}$
and $(S,\mathcal{F})$ a measurable space.
Let $\mathrm{P}(\mathcal{M}_\ast)$ denote the set of positive linear maps on $\mathcal{M}_\ast$.
\begin{definition}[Instrument \text{\cite{DL70,Dav76,84QC}}]
A map $\mathcal{I}:\mathcal{F}\rightarrow \mathrm{P}(\mathcal{M}_\ast)$ is called
an instrument $\mathcal{I}$ for $(\mathcal{M},S)$
if it satisfies the following two conditions:\\
$(1)$ $\Vert\mathcal{I}(S)\rho\Vert=\Vert\rho\Vert$ for all $\rho\in \mathcal{M}_{\ast,+}$;\\
$(2)$ For every $\rho\in\mathcal{M}_\ast$, $M\in\mathcal{M}$ and
countable mutually disjoint sequence $\{\Delta_j\}\subset\mathcal{F}$,
\begin{equation}
\langle \mathcal{I}(\cup_j \Delta_j)\rho, M \rangle
=\sum_j \langle \mathcal{I}(\Delta_j)\rho, M \rangle.
\end{equation}
An instrument $\mathcal{I}$ for $(\mathcal{M},S)$ is called a completely positive instrument,
or a CP instrument for short, if $\mathcal{I}(\Delta)$ is completely positive for every $\Delta\in\mathcal{F}$.
\end{definition}
An instrument $\mathcal{I}$ for $(\mathcal{M},S)$ represents a measuring apparatus 
${\bf A}({\bf x})$ with output variable ${\bf x}$ taking values in the measurable space $(S,\mathcal{F})$,
and specifies both the probability measure $\mathrm{Pr}\{{\bf x}\in(\cdot)\Vert\rho\}$
of ${\bf x}$ and the family $\{\rho_{\{{\bf x}\in\Delta\}}\}_{\Delta\in\mathcal{F}}$
of states after the measurement in each normal state $\rho$ on $\mathcal{M}$, which are given by
\begin{align}
\mathrm{Pr}\{{\bf x}\in\Delta\Vert\rho\}&=\Vert\mathcal{I}(\Delta)\rho\Vert,\hspace{5mm}\Delta\in\mathcal{F},\\
\rho_{\{{\bf x}\in\Delta\}}&=
\left\{
\begin{array}{ll}
\dfrac{\mathcal{I}(\Delta)\rho}{\Vert\mathcal{I}(\Delta)\rho\Vert},
&\quad(\mathrm{if}\;\mathrm{Pr}\{{\bf x}\in\Delta\Vert\rho\}>0),\\
\mathcal{I}(S)\rho, &\quad(\mathrm{otherwise}),
\end{array}
\right.
\end{align}
respectively, where each state $\rho_{\{{\bf x}\in\Delta\}}$ realizes after the measurement
when $\rho$ is prepared before the measurement and output values of ${\bf x}$
not contained in $\Delta$ is ignored during data processing.
Conversely, an instrument $\mathcal{I}$ for $(\mathcal{M},S)$ is defined by
a measuring apparatus ${\bf A}({\bf x})$ with
output variable ${\bf x}$ taking values in the measurable space $(S,\mathcal{F})$
if for all $M\in\mathcal{M}_{s.a.}$
the joint probability measure $\mathrm{Pr}\{(M,{\bf x})\in\Delta\Vert\rho\}$ 
on $(\mathbb{R}\times S,\mathcal{B}(\mathbb{R})\times\mathcal{F})$
of the successive measurement carried out by ${\bf A}({\bf x})$ and
the measurement of $M$ in this order
is an affine function of $\rho\in\mathcal{S}_n(\mathcal{M})$.
Here, the map $\Gamma\times\Delta\in\mathcal{B}(\mathbb{R})\times\mathcal{F}
 \mapsto\mathrm{Pr}\{(M,{\bf x})\in\Gamma\times\Delta\Vert\rho\}$ is defined by
\begin{equation}
\mathrm{Pr}\{M\in\Gamma,{\bf x}\in\Delta\Vert\rho\}=\mathrm{Pr}\{{\bf x}\in\Delta\Vert\rho\}
\mathrm{Pr}\{M\in\Gamma\Vert\rho_{\{{\bf x}\in\Delta\}}\}
\end{equation}
and is then uniquely extended into the probability measure on 
$(\mathbb{R}\times S,\mathcal{B}(\mathbb{R})\times\mathcal{F})$, where 
$\mathrm{Pr}\{M\in\Gamma,{\bf x}\in\Delta\Vert\rho\}=\mathrm{Pr}\{(M,{\bf x})\in\Gamma\times\Delta\Vert\rho\}$.
In addition, it is known that an instrument is CP if and only if
the measuring apparatus that defines the given instrument satisfies
 the condition called the trivial extendability \cite{04URN,14MFQ,16A1}.

Now we consider a map $\mathcal{I}:\mathcal{M}\times\mathcal{F}\rightarrow \mathcal{M}$
satisfying the following three conditions:\\
$(i)$ For every $\Delta\in\mathcal{F}$, $M\mapsto \mathcal{I}(\Delta,M)$ is
a normal positive linear map on $\mathcal{M}$.\\
$(ii)$ $\mathcal{I}(1,S)=1$.\\
$(iii)$ For every $\rho\in\mathcal{M}_\ast$, $M\in\mathcal{M}$ and
countable mutually disjoint sequence $\{\Delta_j\}\subset\mathcal{F}$,
\begin{equation}
\langle \rho, \mathcal{I}(M,\cup_j \Delta_j) \rangle
=\sum_j\langle \rho, \mathcal{I}(M,\Delta_j)\rangle.
\end{equation}
There is a one-to-one correspondence between an instrument for $(\mathcal{M},S)$
and a map $\mathcal{I}:\mathcal{M}\times\mathcal{F}\rightarrow \mathcal{M}$
satisfying the above three conditions, which is given by the relation
\begin{equation}
\langle \mathcal{I}(\Delta)\rho, M \rangle=\langle \rho, \mathcal{I}(M,\Delta)\rangle
\end{equation}
for all $\rho\in\mathcal{M}_\ast$, $M\in\mathcal{M}$ and $\Delta\in\mathcal{F}$.
Thus we also call the map $\mathcal{I}:\mathcal{M}\times\mathcal{F}\rightarrow \mathcal{M}$
an instrument $(\mathcal{M},S)$.
We then define a probability operator-valued measure $\Pi_\mathcal{I}$ 
associated with an instrument $\mathcal{I}$ for $(\mathcal{M},S)$ by
$\Pi_\mathcal{I}(\Delta)=\mathcal{I}(1,\Delta)$ for all $\Delta\in\mathcal{F}$.

Let $\Pi:\mathcal{B}(\mathbb{R})\rightarrow\mathcal{M}$ be a probability operator-valued measure.
For every $n\in\mathbb{N}$, we define the symmetric operator $\Pi^{(n)}$ by
\begin{equation}
\langle\xi|\Pi^{(n)}\eta\rangle = \int x^n\;d\langle \xi|\Pi(x)\eta\rangle
\end{equation}
for any $\xi,\eta\in\mathrm{dom}(\Pi^{(n)})$, where we define the domain $\mathrm{dom}(\Pi^{(n)})$ by
\begin{equation}
\mathrm{dom}(\Pi^{(n)})=\{\xi\in\mathcal{H}\;|\;\int_\mathbb{R}x^{2n}\;d\langle \xi| \Pi(x)\xi\rangle
<\infty\}.
\end{equation}

Every CP instrument admits the following representation theorem.
\begin{proposition}[\text{\cite[Proposition 4.2]{84QC}}]\label{Naimark-Ozawa}
For any CP instrument $\mathcal{I}$ for $(\mathcal{M},S)$, 
there are a Hilbert space $\mathcal{K}$,
a spectral measure $E:\mathcal{F}\rightarrow \textrm{\boldmath $B$}(\mathcal{K})$,
a nondegenerate normal representation
$\pi:\mathcal{M}\rightarrow\textrm{\boldmath $B$}(\mathcal{K})$
and an isometry $V\in\textrm{\boldmath $B$}(\mathcal{H},\mathcal{K})$ satisfying
\begin{align}
\mathcal{I}(M,\Delta) &=V^\ast \pi(M)E(\Delta)V, \label{CPrep}\\
E(\Delta)\pi(M) &=\pi(M)E(\Delta)
\end{align}
for any $\Delta\in\mathcal{F}$ and $M\in\mathcal{M}$,
and $\mathcal{K}=\overline{\mathrm{span}}(\pi(\mathcal{M})E(\mathcal{F})V\mathcal{H}))$.
\end{proposition}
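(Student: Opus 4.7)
The plan is to follow the standard Stinespring-type dilation construction, adapted to simultaneously encode the algebraic structure of $\mathcal{M}$ and the measure-theoretic structure of $(S,\mathcal{F})$. First I would form the algebraic tensor product $\mathcal{E} = \mathcal{M} \odot \mathcal{S}(\mathcal{F}) \odot \mathcal{H}$, where $\mathcal{S}(\mathcal{F})$ denotes the linear span of indicator functions $\{\chi_\Delta : \Delta \in \mathcal{F}\}$. On generators I would define the sesquilinear form
\begin{equation*}
\Big\langle \sum_i M_i \otimes \chi_{\Delta_i} \otimes \xi_i \,\Big|\, \sum_j N_j \otimes \chi_{\Gamma_j} \otimes \eta_j \Big\rangle = \sum_{i,j} \langle \xi_i | \mathcal{I}(M_i^\ast N_j, \Delta_i \cap \Gamma_j) \eta_j \rangle.
\end{equation*}
Positive semi-definiteness of this form amounts exactly to complete positivity of each $\mathcal{I}(\cdot,\Delta)$ combined with countable additivity, once one refines the indicators appearing in a given vector to a common finite disjoint partition of $S$. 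Quotienting by the null subspace and completing yields the candidate Hilbert space $\mathcal{K}$.

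Next, on equivalence classes $[M \otimes \chi_\Delta \otimes \xi]$ I would set $\pi(N)[M \otimes \chi_\Delta \otimes \xi] := [NM \otimes \chi_\Delta \otimes \xi]$, $E(\Delta')[M \otimes \chi_\Delta \otimes \xi] := [M \otimes \chi_{\Delta \cap \Delta'} \otimes \xi]$, and $V\xi := [1 \otimes \chi_S \otimes \xi]$. The isometry property of $V$ follows from $\mathcal{I}(1,S) = 1$; the fact that $E$ is a spectral measure and that $\pi(N)$ commutes with every $E(\Delta')$ is automatic because pointwise multiplication by indicators commutes with left multiplication on $\mathcal{M}$. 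The Stinespring identity $\mathcal{I}(M,\Delta) = V^\ast \pi(M) E(\Delta) V$ is then a direct expansion, since $\langle V\eta | \pi(M) E(\Delta) V \xi \rangle = \langle \eta | \mathcal{I}(M, \Delta) \xi \rangle$. Moreover $\pi(M) E(\Delta) V \xi = [M \otimes \chi_\Delta \otimes \xi]$, so the span condition $\mathcal{K} = \overline{\mathrm{span}}(\pi(\mathcal{M}) E(\mathcal{F}) V \mathcal{H})$ and the nondegeneracy of $\pi$ are visibly built into the construction.

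The main obstacle will be establishing that $\pi$ is a \emph{normal} representation, not merely a bounded $\ast$-homomorphism. Boundedness with $\|\pi(M)\| \le \|M\|$ reduces to the operator-matrix inequality
\begin{equation*}
\bigl[\,\|M\|^2\,\mathcal{I}(N_i^\ast N_j,\,\Delta_i\cap\Delta_j) - \mathcal{I}(N_i^\ast M^\ast M N_j,\,\Delta_i\cap\Delta_j)\,\bigr]_{i,j} \ge 0,
\end{equation*}
which is a direct consequence of complete positivity of $\mathcal{I}(\cdot,\Delta_i\cap\Delta_j)$ applied to the positive matrix $[\|M\|^2 \delta_{ij} - M^\ast M \otimes e_{ij}]$. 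For normality, given a bounded increasing net $M_\alpha \nearrow M$ in the positive part of $\mathcal{M}$, I would test $\sigma$-weak convergence of $\pi(M_\alpha)$ against the generators: the relevant matrix elements reduce to $\langle \xi_i | \mathcal{I}(N_i^\ast M_\alpha N_j,\,\Delta_i\cap\Delta_j) \xi_j \rangle$, and the normality of each $\mathcal{I}(\cdot,\Delta)$ as a positive linear map on $\mathcal{M}$ (condition $(i)$ in the correspondence above) forces these to converge to the analogous matrix elements with $M$ in place of $M_\alpha$. Combined with the uniform norm bound on $\pi(M_\alpha)$, this gives weak operator convergence of $\pi(M_\alpha)$ to $\pi(M)$ on a dense subspace and hence everywhere, yielding the required $\sigma$-weak continuity.
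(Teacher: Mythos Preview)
The paper does not supply its own proof of this proposition; it is quoted verbatim from \cite{84QC} as background material, so there is no in-paper argument to compare against. Your construction is the standard Stinespring--Naimark GNS-type dilation for CP instruments, which is exactly the approach of the cited reference, and the outline you give is correct.

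One cosmetic remark: positive semi-definiteness of the sesquilinear form needs only \emph{finite} additivity (via the disjoint refinement you mention) together with complete positivity of each $\mathcal{I}(\cdot,\Delta)$; countable additivity of the instrument is what you need later to verify that $E$ is a genuine spectral measure, i.e.\ that $E\bigl(\bigcup_{j>N}\Delta_j\bigr)\to 0$ strongly, a step you currently pass over with ``is automatic''. This is not a gap---the verification reduces immediately to condition (2) in the definition of an instrument---but it is where countable additivity actually enters.
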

The quadruple $(\mathcal{K},E,\pi,V)$ in the above proposition is unique up to unitary equivalence.
We call the quadruple $(\mathcal{K},E,\pi,V)$ a minimal dilation of $\mathcal{I}$.

Let $\mathcal{X}$ and $\mathcal{Y}$ be C$^\ast$-algebras on Hilbert spaces $\mathcal{H}$ and $\mathcal{K}$,
respectively.
The minimal tensor product of $\mathcal{X}$ and $\mathcal{Y}$, denoted by
$\mathcal{X}\otimes_{\mathrm{min}}\mathcal{Y}$, is defined by the completion of
the algebraic tensor product $\mathcal{X}\otimes_{\mathrm{alg}}\mathcal{Y}$
of $\mathcal{X}$ and $\mathcal{Y}$
(as a $^\ast$-subalgebra of $\textrm{\boldmath $B$}(\mathcal{H}\otimes\mathcal{K})$)
 by the norm topology of $\textrm{\boldmath $B$}(\mathcal{H}\otimes\mathcal{K})$.
Let $\mathcal{M}$ and $\mathcal{N}$ be von Neumann algebras on Hilbert spaces $\mathcal{H}$ and $\mathcal{K}$,
respectively. The $W^\ast$-tensor product of $\mathcal{M}$ and $\mathcal{M}$, denoted by
$\mathcal{M}\;\overline{\otimes}\;\mathcal{N}$, is defined by the completion of
$\mathcal{M}\otimes_{\mathrm{alg}}\mathcal{N}$ by the ultraweak topology 
of $\textrm{\boldmath $B$}(\mathcal{H}\otimes\mathcal{K})$.
We refer the reader to \cite{EL77,Tak79} for details on tensor products of operator algebras.

By Proposition \ref{Naimark-Ozawa} \cite[Proposition 3.3]{16A1},
for every CP instrument $\mathcal{I}$ for $(\mathcal{M},S)$,
there exists a unique unital CP map
$\Psi_\mathcal{I}:\mathcal{M}\otimes_{\mathrm{min}} L^\infty(S,\mathcal{I})\rightarrow\mathcal{M}$
such that $\Psi_\mathcal{I}(M\otimes[\chi_\Delta])=\mathcal{I}(M,\Delta)$
for all $M\in\mathcal{M}$ and $\Delta\in\mathcal{F}$.

Next, we shall define the concept of measuring process,
which is nothing but a quantum mechanical modeling of measuring apparatus.
Let $\mathcal{M}$ and $\mathcal{N}$ be von Neumann algebras.
For every $\sigma\in \mathcal{N}_\ast$, the map
$\mathrm{id}\otimes\sigma:\mathcal{M}\;\overline{\otimes}\;\mathcal{N}
\rightarrow\mathcal{M}$ is defined by
$\langle \rho\otimes\sigma,X \rangle=\langle \rho,(\mathrm{id}\otimes\sigma)(X) \rangle$
for all $X\in\mathcal{M}\;\overline{\otimes}\;\mathcal{N}$
and $\rho\in\mathcal{M}_\ast$.

\begin{definition}[Measuring process \text{\cite[Definition 3.4]{16A1}}]
A measuring process $\mathbb{M}$ for $(\mathcal{M},S)$
is a quadruple $\mathbb{M}=(\mathcal{K},\sigma,F,U)$
consisting of a Hilbert space $\mathcal{K}$, a normal
state $\sigma$ on $\textrm{\boldmath $B$}(\mathcal{K})$,
a spectral measure $F:\mathcal{F}\rightarrow \textrm{\boldmath $B$}(\mathcal{K})$,
and a unitary operator $U$ on $\mathcal{H}\otimes\mathcal{K}$
satisfying 
\begin{equation}
\{\mathcal{I}_\mathbb{M}(M,\Delta)\;|\;M\in\mathcal{M},\Delta\in\mathcal{F}\}\subset\mathcal{M},
\end{equation}
 where $\mathcal{I}_\mathbb{M}$ is a CP instrument for $(\textrm{\boldmath $B$}(\mathcal{H}),S)$
defined by $\mathcal{I}_\mathbb{M}(X,\Delta)=(\mathrm{id}\otimes\sigma)[U^\ast(X\otimes F(\Delta))U]$
for all $X\in\textrm{\boldmath $B$}(\mathcal{H})$ and $\Delta\in\mathcal{F}$.
\end{definition}

\begin{definition}[Statistical equivalence class of measuring processes \cite{84QC}]
Two measuring processes $\mathbb{M}_1=(\mathcal{K}_1,\sigma_1,F_1,U_1)$
and $\mathbb{M}_2=(\mathcal{K}_2,\sigma_2,F_2,U_2)$ for $(\mathcal{M},S)$
are said to be statistically equivalent if
$\mathcal{I}_{\mathbb{M}_1}(M,\Delta)=\mathcal{I}_{\mathbb{M}_2}(M,\Delta)$
 for all $M\in\mathcal{M}$ and $\Delta\in\mathcal{F}$.
\end{definition}

MO established the following one-to-one correspondence for the case of
$\mathcal{M}=\textrm{\boldmath $B$}(\mathcal{H})$.

\begin{theorem}[\text{\cite[Theorem 5.1]{84QC}}]
Let $\mathcal{H}$ be a Hilbert space and $(S,\mathcal{F})$ a measurable space.
Then there is a one-to-one correspondence
between statistical equivalence classes of measuring processes $\mathbb{M}=(\mathcal{K},\sigma,F,U)$
for $(\textrm{\boldmath $B$}(\mathcal{H}),S)$
and CP instruments $\mathcal{I}$ for $(\textrm{\boldmath $B$}(\mathcal{H}),S)$,
which is given by the relation $\mathcal{I}(X,\Delta)=\mathcal{I}_\mathbb{M}(X,\Delta)$
for all $\Delta\in\mathcal{F}$ and $X\in\textrm{\boldmath $B$}(\mathcal{H})$.
\end{theorem}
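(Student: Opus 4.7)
The plan is to establish the correspondence in three stages: first, to show that the map $\mathbb{M}\mapsto\mathcal{I}_\mathbb{M}$ is well-defined on statistical equivalence classes and always yields a CP instrument; second, to show that every CP instrument for $(\textrm{\boldmath $B$}(\mathcal{H}),S)$ arises in this way (surjectivity); and third, to observe that two measuring processes inducing the same instrument are, by definition, statistically equivalent (injectivity). Injectivity is tautological and well-definedness is routine, so the substantive content is surjectivity, where Proposition \ref{Naimark-Ozawa} provides the essential input.

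For the forward direction I would read $\mathcal{I}_\mathbb{M}(X,\Delta)=(\mathrm{id}\otimes\sigma)[U^\ast(X\otimes F(\Delta))U]$ as the composition of the $*$-automorphism $\mathrm{Ad}(U^\ast)$ with the slice map $\mathrm{id}\otimes\sigma$. The first is completely positive and normal, and the second is a normal CP map because $\sigma$ is normal; countable additivity in $\Delta$ is inherited from the spectral measure $F$, and $\mathcal{I}_\mathbb{M}(1,S)=1$ follows from $U^\ast U=1$ and $F(S)=1$. The definition of a measuring process forces $\mathcal{I}_\mathbb{M}(M,\Delta)\in\mathcal{M}=\textrm{\boldmath $B$}(\mathcal{H})$, so $\mathcal{I}_\mathbb{M}$ is indeed a CP instrument, and it depends only on the statistical equivalence class of $\mathbb{M}$.

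For surjectivity, I would apply Proposition \ref{Naimark-Ozawa} to obtain a minimal dilation $(\mathcal{K}_0,E,\pi,V)$ of $\mathcal{I}$ with $\mathcal{I}(X,\Delta)=V^\ast\pi(X)E(\Delta)V$. Because $\pi$ is a nondegenerate normal representation of the type I factor $\textrm{\boldmath $B$}(\mathcal{H})$, the structure theorem for such representations produces a Hilbert space $\mathcal{K}$ and a unitary identifying $\mathcal{K}_0\cong\mathcal{H}\otimes\mathcal{K}$ under which $\pi(X)=X\otimes 1$. Then $E(\Delta)$ commutes with every $\pi(X)$, hence $E(\Delta)\in\pi(\textrm{\boldmath $B$}(\mathcal{H}))'=\mathbb{C}1\,\overline{\otimes}\,\textrm{\boldmath $B$}(\mathcal{K})$, so there is a spectral measure $F:\mathcal{F}\rightarrow\textrm{\boldmath $B$}(\mathcal{K})$ with $E(\Delta)=1\otimes F(\Delta)$. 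Fixing a unit vector $\xi_0\in\mathcal{K}$, I would then package $V$ in the form $U(\cdot\otimes\xi_0)$ for some unitary $U$ on $\mathcal{H}\otimes\mathcal{K}$.

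The packaging step is the main obstacle. The map $\eta\otimes\xi_0\mapsto V\eta$ is an isometry from the subspace $\mathcal{H}\otimes\xi_0\subset\mathcal{H}\otimes\mathcal{K}$ into $\mathcal{H}\otimes\mathcal{K}$, and it admits a unitary extension to all of $\mathcal{H}\otimes\mathcal{K}$ precisely when the orthogonal complements of its initial and final subspaces have the same Hilbert-space dimension. A cardinal-arithmetic check shows this holds automatically whenever $\dim\mathcal{K}\geq2$; if necessary, one enlarges $\mathcal{K}$ by tensoring with an auxiliary infinite-dimensional Hilbert space to guarantee it. With $U$ in hand, taking $\sigma=\omega_{\xi_0}$ the vector state yields
\[
\langle\eta|(\mathrm{id}\otimes\sigma)[U^\ast(X\otimes F(\Delta))U]\eta\rangle
=\langle V\eta|(X\otimes F(\Delta))V\eta\rangle
=\langle\eta|\mathcal{I}(X,\Delta)\eta\rangle
\]
for every $\eta\in\mathcal{H}$, so $\mathbb{M}=(\mathcal{K},\sigma,F,U)$ realizes $\mathcal{I}$ and surjectivity is established. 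The uniqueness clause then follows immediately from the definition of statistical equivalence. I note that this argument uses the type I structure of $\textrm{\boldmath $B$}(\mathcal{H})$ in an essential way, which is exactly why the theorem is restricted to $\mathcal{M}=\textrm{\boldmath $B$}(\mathcal{H})$ rather than general $\sigma$-finite von Neumann algebras.
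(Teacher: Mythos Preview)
The paper does not actually prove this theorem; it is quoted from Ozawa's 1984 paper \cite{84QC} without proof, so there is no in-paper argument to compare against. That said, your outline reproduces precisely the original strategy of \cite{84QC}: use the minimal dilation of Proposition~\ref{Naimark-Ozawa}, invoke the structure theorem for nondegenerate normal representations of $\textrm{\boldmath $B$}(\mathcal{H})$ to write $\pi(X)=X\otimes 1$ on $\mathcal{H}\otimes\mathcal{K}$, read off $E(\Delta)=1\otimes F(\Delta)$ from the commutant, and then extend the isometry $\eta\otimes\xi_0\mapsto V\eta$ to a unitary $U$ on $\mathcal{H}\otimes\mathcal{K}$.

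One small technical caution: your claim that the dimension match for the unitary extension ``holds automatically whenever $\dim\mathcal{K}\geq 2$'' is not quite right in the infinite-dimensional case. If, for instance, $\dim\mathcal{H}=\dim\mathcal{K}=\aleph_0$, the range $V\mathcal{H}$ could have finite codimension in $\mathcal{H}\otimes\mathcal{K}$ while $\mathcal{H}\otimes\xi_0$ has infinite codimension. Your fallback clause---tensoring $\mathcal{K}$ with an auxiliary infinite-dimensional space---is exactly the standard remedy and makes the argument go through, so the proof is correct; just don't lean on the $\dim\mathcal{K}\geq 2$ assertion without the enlargement.
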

This theorem states that every CP instrument is modeled by a measuring process and
every measuring process defines a CP instrument.
To generalize the above theorem to general $\sigma$-finite von Neumann algebras,
we define the following property for CP instruments.

\begin{definition}[Normal extension property \text{\cite[Definition 3.4]{16A1}}]
Let $\mathcal{I}$ be a CP instrument for $(\mathcal{M},S)$.
$\mathcal{I}$ has the normal extension property (NEP) if there exists a unital normal CP map
$\widetilde{\Psi_\mathcal{I}}:\mathcal{M}\;\overline{\otimes}\; L^\infty(S,\mathcal{I})\rightarrow
\textrm{\boldmath $B$}(\mathcal{H})$ such that
$\widetilde{\Psi_\mathcal{I}}|_{\mathcal{M}\otimes_{\mathrm{min}}L^\infty(S,\mathcal{I})}=\Psi_\mathcal{I}$.
\end{definition}
The authors established the following theorem in \cite{16A1}.

\begin{theorem}[\text{\cite[Theorem 3.2]{16A1}}]\label{mainse4}
Let $\mathcal{M}$ be a von Neumann algebra on a Hilbert space
$\mathcal{H}$ and $(S,\mathcal{F})$ a measurable space.
Then there is a one-to-one correspondence
between statistical equivalence classes of measuring processes $\mathbb{M}=(\mathcal{K},\sigma,F,U)$
for $(\mathcal{M},S)$ and CP instruments $\mathcal{I}$ for $(\mathcal{M},S)$ \textbf{with the NEP},
which is given by the relation $\mathcal{I}(M,\Delta)=\mathcal{I}_{\mathbb{M}}(M,\Delta)$
for all $\Delta\in\mathcal{F}$ and $M\in\mathcal{M}$.
\end{theorem}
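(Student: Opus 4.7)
The plan is to establish the two directions of the bijection separately.

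For the forward direction (a measuring process yields a CP instrument with the NEP), observe that $(\mathrm{id}\otimes\sigma)\circ\mathrm{Ad}_{U^\ast}$ is a normal unital CP map from $\textrm{\boldmath $B$}(\mathcal{H}\otimes\mathcal{K})$ into $\textrm{\boldmath $B$}(\mathcal{H})$, and its composition with the normal embedding $\mathrm{id}\otimes\pi_F:\mathcal{M}\;\overline{\otimes}\;L^\infty(S,\mathcal{I})\to\mathcal{M}\;\overline{\otimes}\;\textrm{\boldmath $B$}(\mathcal{K})$ induced by the spectral measure $F$ gives a candidate for the required extension $\widetilde{\Psi_\mathcal{I}}$. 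Here $\pi_F$ is the normal $\ast$-homomorphism $f\mapsto \int f\,dF$; a short argument using $\Pi_{\mathcal{I}_\mathbb{M}}(\Delta)=(\mathrm{id}\otimes\sigma)[U^\ast(1\otimes F(\Delta))U]$ shows that $\pi_F$ descends to $L^\infty(S,\mathcal{I})$ on the relevant support projection. Agreement with $\Psi_\mathcal{I}$ on simple tensors $M\otimes[\chi_\Delta]$ is immediate, and normality of $\widetilde{\Psi_\mathcal{I}}$ is inherited from the normality of $\sigma$, of $\pi_F$, and of $\mathrm{Ad}_{U^\ast}$.

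For the backward direction, I start with a CP instrument $\mathcal{I}$ possessing the NEP and apply a Stinespring-type dilation to the normal unital CP map $\widetilde{\Psi_\mathcal{I}}:\mathcal{M}\;\overline{\otimes}\;L^\infty(S,\mathcal{I})\to\textrm{\boldmath $B$}(\mathcal{H})$. This produces a Hilbert space $\mathcal{K}_0$, a normal unital $\ast$-representation $\tau:\mathcal{M}\;\overline{\otimes}\;L^\infty(S,\mathcal{I})\to\textrm{\boldmath $B$}(\mathcal{K}_0)$, and an isometry $W:\mathcal{H}\to\mathcal{K}_0$ satisfying $\widetilde{\Psi_\mathcal{I}}(Y)=W^\ast\tau(Y)W$. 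The restriction $\tau_1(M):=\tau(M\otimes 1)$ is a normal representation of $\mathcal{M}$; by the structure theory of normal representations of a $\sigma$-finite von Neumann algebra, it is spatially realized as a subrepresentation of an amplification $M\mapsto M\otimes 1_\mathcal{L}$ on some $\mathcal{H}\otimes\mathcal{L}$, and the commuting representation $\tau_2(f):=\tau(1\otimes f)$ transports through the same intertwiner to a spectral measure $F_0$ in $\mathcal{M}'\;\overline{\otimes}\;\textrm{\boldmath $B$}(\mathcal{L})$. To finish, I enlarge $\mathcal{L}$ to an ancilla Hilbert space $\mathcal{K}$ of sufficient dimension (using $\sigma$-finiteness of $\mathcal{M}$ to control the multiplicity), extend the composite isometry $\mathcal{H}\to\mathcal{H}\otimes\mathcal{L}\hookrightarrow\mathcal{H}\otimes\mathcal{K}$ to a unitary $U$ on $\mathcal{H}\otimes\mathcal{K}$ of the form $U(\psi\otimes\xi_0)=\iota(W\psi)$ for a distinguished unit vector $\xi_0$ in the auxiliary factor, set $\sigma:=\omega_{\xi_0}$, and transport $F_0$ to a spectral measure $F$ on $\mathcal{K}$ commuting with the $\mathcal{M}$-action. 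Unwinding the definitions together with $\widetilde{\Psi_\mathcal{I}}(M\otimes[\chi_\Delta])=\mathcal{I}(M,\Delta)$ then yields the desired identity $\mathcal{I}(M,\Delta)=(\mathrm{id}\otimes\sigma)[U^\ast(M\otimes F(\Delta))U]$.

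The main obstacle is this backward construction: specifically, the combination of the spatial realization of $\tau_1$ as a subrepresentation of an amplification of the identity representation of $\mathcal{M}$ and the subsequent extension of the resulting isometry to a unitary on an ancilla of controlled size. Both steps lean essentially on $\sigma$-finiteness of $\mathcal{M}$ to tame the multiplicity space. Without the NEP, the Stinespring-type dilation would only apply to $\mathcal{M}\otimes_{\mathrm{min}}L^\infty(S,\mathcal{I})$, and the resulting representation would not be normal on the full $W^\ast$-tensor product, which would block the passage from $\tau_2$ to an honest spectral measure on the ancilla. Uniqueness up to statistical equivalence is routine: the forward assignment descends to equivalence classes by definition, and injectivity is immediate because statistical equivalence of measuring processes is defined precisely by equality of the associated CP instruments.
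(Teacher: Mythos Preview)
First, note that the present paper does not supply its own proof of this theorem: it is quoted as Theorem~3.2 of \cite{16A1} and used as background for the measurement-theoretic setup. So there is no in-paper argument to compare against; your proposal has to be judged on its own merits.

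Your forward direction is essentially right. The backward direction, however, has a genuine gap at the step ``transport $F_0$ to a spectral measure $F$ on $\mathcal{K}$.'' By applying the structure theorem only to the restriction $\tau_1=\tau(\,\cdot\,\otimes 1)$, you realize $\mathcal{K}_0$ as $P(\mathcal{H}\otimes\mathcal{L})$ with $P\in\mathcal{M}'\,\overline{\otimes}\,\textrm{\boldmath $B$}(\mathcal{L})$, and $\tau_2$ is carried to a spectral measure $F_0$ taking values in $P\bigl(\mathcal{M}'\,\overline{\otimes}\,\textrm{\boldmath $B$}(\mathcal{L})\bigr)P$. There is no reason for such an $F_0$ to be of the form $1_{\mathcal{H}}\otimes F$ with $F$ acting on the ancilla factor alone, which is precisely what the definition of a measuring process demands of $F:\mathcal{F}\to\textrm{\boldmath $B$}(\mathcal{K})$; the word ``transport'' conceals exactly the nontrivial content of the theorem. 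A clean repair is to apply the structure theorem for normal representations to the \emph{full} $\tau$ on the $\sigma$-finite algebra $\mathcal{M}\,\overline{\otimes}\,L^\infty(S,\mathcal{I})$, represented on $\mathcal{H}\otimes L^2(S,\nu)$ for a suitable localizing measure $\nu$: then $\tau(M\otimes f)$ becomes the compression of $M\otimes M_f\otimes 1_{\mathcal{L}'}$, and the meter spectral measure $F(\Delta)=M_{\chi_\Delta}\otimes 1_{\mathcal{L}'}$ automatically lives on the ancilla $\mathcal{K}=L^2(S,\nu)\otimes\mathcal{L}'$. After that, the dimension-count extension of the isometry $W$ to a unitary $U$ with $U(\,\cdot\,\otimes\xi_0)=W$ proceeds as you indicate.
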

Thus every CP instrument defined on general $\sigma$-finite von Neumann algebras
is not always realized by a measuring process.
The followings are examples of CP instruments without the NEP given in \cite[Section 4]{16A1}.
\begin{example}[\textrm{\cite[Example 5.1]{16A1}}]\label{NoNEP1}
Let $m$ be Lebesgue measure on $[0,1]$. A CP instrument $\mathcal{I}_m$ for
$(L^{\infty}([0,1],m), [0,1])$ is defined by $\mathcal{I}_m(f,\Delta)=[\chi_\Delta] f$
for all $\Delta\in\mathcal{B}([0,1])$ and $f\in L^{\infty}([0,1],m)$.
\end{example}
\begin{example}[\textrm{\cite[Example 5.2]{16A1}}]\label{NoNEP2}
Let $\mathcal{R}$ be an approximately finite-dimensional (AFD) von Neumann algebra of type $\mathrm{II}_1$
on a separable Hilbert space $\mathcal{H}$.
Let $A$ be a self-adjoint operator with continuous spectrum affiliated with $\mathcal{R}$
and $\mathcal{E}$ a (normal) conditional expectation of $\mathcal{R}$ onto
$\{A\}^\prime\cap\mathcal{R}$,
where $\{A\}^\prime=\{E^{A}(\Delta)\;|\; \Delta\in \mathcal{B}(\mathbb{R})\}^\prime$.
A CP instrument $\mathcal{I}_A$ for $(\mathcal{R},\mathbb{R})$ is defined by 
\begin{equation}\label{repeatable}
\mathcal{I}_A(M,\Delta)=\mathcal{E}(M)E^A(\Delta)
\end{equation}
for all $M\in\mathcal{R}$ and $\Delta\in\mathcal{B}(\mathbb{R})$.
\end{example}
It also is shown in \cite[Section 4]{16A1}
that every CP instrument defined on atomic von Neumann algebras has the NEP.
By contrast, CP instruments without the NEP are defined on non-atomic (injective) von Neumann algebras.
Let $\mathcal{M}$ be an injective von Neumann algebra on a Hilbert space $\mathcal{H}$
and $(S,\mathcal{F})$ a measurable space.
For every CP instrument $\mathcal{I}$ for $(\mathcal{M},S)$
there exists a net $\{\mathcal{I}_\alpha\}_{\alpha\in A}$
of CP instruments for $(\mathcal{M},S)$ with the NEP such that $\mathcal{I}_\alpha$
ultraweakly converges to $\mathcal{I}$ and that $\mathcal{I}(1,\Delta)=\mathcal{I}_\alpha(1,\Delta)$
for all $\alpha$ and $\Delta\in\mathcal{F}$ \cite[Section 4]{16A1}.
Since local algebras in algebraic quantum field theory are injective,
we can apply these results to the characterization of local measurements of quantum fields
\cite[Section 6]{16A1}.

Lastly, we shall introduce the theory of standard forms of 
von Neumann algebras.
Let $\mathcal{H}$ be a Hilbert space. 
For every subset $S$ of $\textrm{\boldmath $B$}(\mathcal{H})$,
let $S^\prime$ denote the commutant of $S$, i.e., 
$S^\prime=\{A\in\textrm{\boldmath $B$}(\mathcal{H})\;|\;AB=BA\;
\mathrm{for}\;\mathrm{all}\;B\in S\}$.
We call a convex subset $\mathcal{P}$ of $\mathcal{H}$
a cone of $\mathcal{H}$. For every cone $\mathcal{P}$ of $\mathcal{H}$, we define
the dual cone $\mathcal{P}^\vee$ of $\mathcal{P}$ by
\begin{equation}
\mathcal{P}^\vee=\{\xi\in\mathcal{H}\;|\;\langle \xi| \eta\rangle\geq 0\;
\mathrm{for}\;\mathrm{all}\;\eta\in\mathcal{P}\}.
\end{equation}
A cone $\mathcal{P}$ of $\mathcal{H}$ is said to be self-dual if $\mathcal{P}^\vee=\mathcal{P}$.
For a von Neumann algebra $\mathcal{M}$ on a Hilbert space $\mathcal{H}$,
a quadruple $(\mathcal{M},\mathcal{H},\mathcal{P},J)$ of an anti-linear isometry $J$ with $J^2=1$,
called the modular conjugation $J$ of $\mathcal{M}$,
and a self-dual cone $\mathcal{P}$ of $\mathcal{H}$ is called a standard form of $\mathcal{M}$
if it satisfies the following four conditions:\\
$(1)$ $J\mathcal{M}J=\mathcal{M}^\prime$;\\
$(2)$ $J\xi=\xi$ for all $\xi\in\mathcal{P}$;\\
$(3)$ $MJMJ\mathcal{P}\subset\mathcal{P}$ for all $M\in\mathcal{M}$;\\
$(4)$ $JZJ=Z^\ast$ for all $Z\in\mathcal{Z}(\mathcal{M})=\mathcal{M}\cap\mathcal{M}^\prime$.\\
In fact, it is shown in \cite[Lemma 3.19]{AH14} that the fourth condition is redundant.
Namely, a quadruple $(\mathcal{M},\mathcal{H},\mathcal{P},J)$
satisfying the conditions $(1)$, $(2)$ and $(3)$ is a standard form of $\mathcal{M}$.
For every von Neumann algebra $\mathcal{N}$ on a Hilbert space $\mathcal{K}$,
there exists a standard form $(\mathcal{M},\mathcal{H},\mathcal{P},J)$ such that
$\mathcal{N}$ is $W^\ast$-isomorphic to $\mathcal{M}$.
Therefore, we assume that von Neumann algebras appearing in the paper
are in standard forms without loss of generality.

\section{Error and Disturbance}\label{3}

In this section, we define an error and a disturbance in quantum measurement theory,
which are introduced and applied in past investigations \cite{88MS,89RS,03UVR,03UPQ,04URN}.
The concept of error is fundamental for measuring how accurately a measuring apparatus can measure an observable,
and is defined (or characterized) as the root-mean-square of
the ``difference" between an observable to be measured
and an output variable of the measuring apparatus as an observable actually measured.
On the other hand, the concept of disturbance is essential for
estimating the effect of measurement on the system to be measured,
and is defined as the root-mean-square of the ``difference" between an observable 
before the measurement and the identical one after the measurement.
It should be noted that the disturbance of an observable $B$ caused by a measurement 
of $A$ can be defined as the error of such a measurement of $B$ in the state just before
the $A$-measurement that is actually carried out by the precise $B$-measuring 
apparatus just after the $A$-measurement \cite{03UVR,03UPQ}.
These quantities have been studied from various perspectives,
therefore, we will mention them minimally herein.

Let $\mathcal{M}$ be a von Neumann algebra on a Hilbert space $\mathcal{H}$
and $A$, $B$ self-adjoint elements of $\mathcal{M}$.
Let $\mathcal{I}$ be a CP instrument for $(\mathcal{M},\mathbb{R})$.
$\mathcal{I}$ physically corresponds to a measuring apparatus with output variable ${\bf x}$ taking values in
$(\mathbb{R},\mathcal{B}(\mathbb{R}))$.
We then define an error $\varepsilon(A)=
\varepsilon(A,\rho;\mathcal{I})$ of measurement of $A$ in $\rho$
and a disturbance $\eta(B)=\eta(B,\rho;\mathcal{I})$ of $B$ in $\rho$ by
\begin{align}
\varepsilon(A,\rho;\mathcal{I}) &=
\langle \rho,\Pi_\mathcal{I}^{(2)}-A\Pi_\mathcal{I}^{(1)}-\Pi_\mathcal{I}^{(1)}A+A^2 \rangle^{\frac{1}{2}},\\
\eta(B,\rho;\mathcal{I}) &=
\langle \rho, \mathcal{I}(B^2,\mathbb{R})-B\mathcal{I}(B,\mathbb{R})-
\mathcal{I}(B,\mathbb{R})B+B^2\rangle^{\frac{1}{2}},
\end{align}
respectively.
If $\mathcal{I}$ is realized by a measuring process
$(\mathcal{K},\sigma,F,U)$ for $(\mathcal{M},\mathbb{R})$,
it holds that
\begin{align}
\varepsilon(A,\rho;\mathcal{I}) &=\langle 
\tilde{\rho}\otimes\sigma,(U^\ast(1\otimes F^{(1)})U-A\otimes1)^2 \rangle^{\frac{1}{2}},\\
\eta(B,\rho;\mathcal{I}) &=\langle 
\tilde{\rho}\otimes\sigma,(U^\ast(B\otimes 1)U-B\otimes1)^2 \rangle^{\frac{1}{2}},
\end{align}
where $\tilde{\rho}$ is a normal state on $\textrm{\boldmath $B$}(\mathcal{H})$
such that $\rho(M)=\tilde{\rho}(M)$ for all $M\in\mathcal{M}$.
The operator $U^\ast(1\otimes F^{(1)})U-A\otimes1$ is often called the noise
operator of the measuring process $(\mathcal{K},\sigma,F,U)$ in measuring $A$ and the error 
measure $\varepsilon$ is often called the noise-operator-based error.
Both $\varepsilon(A,\rho;\mathcal{I})$ and $\eta(B,\rho;\mathcal{I})$ 
are considered as 
a natural generalization of
those in classical probability theory.
It should also be remarked that the definitions of error and disturbance
are independent of the existence and the choice of measuring processes
which realize the given CP instrument.
A justification of the use of the noise-operator-based error was discussed
extensively in Ref.~\cite{18PPT}; it will be briefly discussed also in the last 
section.

Let $\mathcal{I}$ be a CP instrument for $(\mathcal{M},\mathbb{R})$ and
$(\mathcal{K},E,\pi,V)$ the minimal dilation of $\mathcal{I}$. 
We assume that $E^{(1)}$ is a bounded operator on $\mathcal{K}$,
so that $E^{(2)}=(E^{(1)})^2$.
Then we have
\begin{align}
\varepsilon(A,\rho;\mathcal{I}) &=
\langle \rho,(E^{(1)}V-VA)^\ast (E^{(1)}V-VA)\rangle^{\frac{1}{2}},\\
\eta(B,\rho;\mathcal{I}) &=
\langle \rho,(\pi(B)V-VB)^\ast (\pi(B)V-VB) \rangle^{\frac{1}{2}}.
\end{align}

\section{Main Theorem}\label{4}
We are now ready to state the main theorem.
Let $(\mathcal{M},\mathcal{H},\mathcal{P},J)$ be a standard form
with a $\sigma$-finite von Neumann algebra $\mathcal{M}$.

\begin{theorem}[]\label{EDUR}
Let $A,B$ be elements of $\mathcal{M}_{s.a.}$, $\rho\in\mathcal{S}_n(\mathcal{M})$,
$\mathcal{I}$ a CP instrument for $(\mathcal{M},\mathbb{R})$ and
$(\mathcal{K},E,\pi,V)$ the minimal dilation of $\mathcal{I}$.
Assume that $E^{(1)}$ is a bounded operator on $\mathcal{K}$. Then we have
\begin{equation}
\varepsilon(A)^2\sigma(B)^2+\sigma(A)^2\eta(B)^2+2\varepsilon(A)\eta(B)
\sqrt{\sigma(A)^2\sigma(B)^2-D_{AB}^2}
\geq D_{AB}^2.
\end{equation}
\end{theorem}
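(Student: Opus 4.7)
The plan is to adapt Ozawa's \cite{14EDR} proof of \eqref{Ozawa2} from $\mathcal{M}=\textrm{\boldmath $B$}(\mathcal{H})$ to the general standard form, with the unique $\mathcal{P}$-representative $\xi_\rho$ of $\rho$ and the modular conjugation $J$ replacing the canonical purification. Denote the noise and disturbance operators from the minimal dilation by $N := E^{(1)}V-VA$ and $D := \pi(B)V-VB$, so that $\varepsilon(A)=\|N\xi_\rho\|$ and $\eta(B)=\|D\xi_\rho\|$; with $\alpha := A-\rho(A)\cdot 1$, $\beta := B-\rho(B)\cdot 1$, one has $\sigma(A)=\|\alpha\xi_\rho\|$ and $\sigma(B)=\|\beta\xi_\rho\|$.

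The first step is a commutator identity on the dilation side. Since $E(\Delta)\pi(M)=\pi(M)E(\Delta)$ for all $\Delta\in\mathcal{F}$, $M\in\mathcal{M}$, the boundedness of $E^{(1)}$ gives $[E^{(1)},\pi(B)]=0$; computing $V^*E^{(1)}\pi(B)V=V^*\pi(B)E^{(1)}V$ via $V^*V=1$ and expanding using $V^*E^{(1)}=AV^*+N^*$ and $V^*\pi(B)=BV^*+D^*$ then yields
\[
[A,B] = [A_1,B] + [A,B_1] + N^*D - D^*N,
\]
where $A_1:=V^*E^{(1)}V$ and $B_1:=V^*\pi(B)V$ belong to $\mathcal{M}$.

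The second, more substantive step is to express $D_{AB}$ in a form adapted to this identity. For each $U\in\mathcal{M}$ with $\|U\|\le 1$, set $U':=JUJ\in\mathcal{M}'$; using $J\xi_\rho=\xi_\rho$, the self-adjointness of $J(-i[A,B])J\in\mathcal{M}'$, its commutativity with $U\in\mathcal{M}$, and the antilinear-isometry identity $\langle Jx|y\rangle=\overline{\langle x|Jy\rangle}$, one verifies that
\[
|\omega_{A,B,\rho}(U)| = |\langle\xi_\rho\,|\,[A,B]\,U'\xi_\rho\rangle|.
\]
Because $U\mapsto JUJ$ bijects the closed unit balls of $\mathcal{M}$ and $\mathcal{M}'$, taking the supremum gives $2D_{AB}=\sup\{|\langle\xi_\rho|[A,B]U'\xi_\rho\rangle| : U'\in\mathcal{M}',\ \|U'\|\le 1\}$. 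Fix an extremizing $U'$ and substitute the commutator identity from Step~1; because $[A_1,B],[A,B_1]\in\mathcal{M}$ commute with $U'\in\mathcal{M}'$, expanding the brackets and applying the self-adjointness of $A,B,A_1,B_1$ exhibits $2D_{AB}$ as the modulus of a real-linear combination of inner products among four vectors in $\mathcal{K}$ built from $N,D,\alpha,\beta$ and $U'\xi_\rho$, whose norms are bounded respectively by $\varepsilon(A),\eta(B),\sigma(A),\sigma(B)$; this last bound follows because $N^*N,D^*D,\alpha^2,\beta^2\in\mathcal{M}$ commute with $U'^*U'\le 1\in\mathcal{M}'$, so e.g.\ $\|NU'\xi_\rho\|^2=\langle\xi_\rho|N^*N\,U'^*U'\xi_\rho\rangle\le\langle\xi_\rho|N^*N\xi_\rho\rangle=\varepsilon(A)^2$.

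The final step is a Branciard-type geometric inequality for four vectors $n,d,a,b$ in a Hilbert space: if $|\mathrm{Im}\langle n|b\rangle-\mathrm{Im}\langle d|a\rangle|\ge X\ge 0$ and $X^2\le\|a\|^2\|b\|^2$, then
\[
\|n\|^2\|b\|^2+\|d\|^2\|a\|^2+2\|n\|\|d\|\sqrt{\|a\|^2\|b\|^2-X^2}\ge X^2,
\]
an elementary Cauchy--Schwarz-type lemma proved by decomposing $a,b$ into components along and transverse to $n,d$. Applying it with $X=D_{AB}$ and the norm bounds from the preceding step gives the theorem. The main obstacle I foresee is the preceding step: distilling the abstract functional norm $\|\omega_{A,B,\rho}\|$ into a concrete inner-product expression compatible with the Branciard inequality, which is precisely the upgrade from $C_{AB}$ to $D_{AB}$ distinguishing \cite{14EDR} from the Branciard bound \cite{Bra13,Bra14}; here the standard-form structure ($J\xi_\rho=\xi_\rho$ and $J\mathcal{M}J=\mathcal{M}'$) plays the role that the Hilbert--Schmidt / canonical-purification representation plays in the $\textrm{\boldmath $B$}(\mathcal{H})$ proof, unifying the two arguments of \cite{14EDR}. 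The hypothesis that $E^{(1)}$ is bounded is a secondary technicality ensuring that the commutator identity is a literal operator identity rather than one on a dense domain.
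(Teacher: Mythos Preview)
Your first two steps are sound: the commutator identity $[A,B]=[A_1,B]+[A,B_1]+N^*D-D^*N$ is correct, and your supremum formula $2D_{AB}=\sup_{\|U'\|\le 1,\,U'\in\mathcal{M}'}|\langle\xi_\rho|[A,B]U'\xi_\rho\rangle|$ is exactly the polar-decomposition trick the paper uses (the extremizer is $U'=JWJ$ for the self-adjoint partial isometry $W$ in the polar decomposition of $\omega_{A,B,\rho}$). The gap is in your final step: the ``Branciard-type'' lemma you state is \emph{false} as written. In the one-dimensional complex Hilbert space take $a=b=1$, $n=\tfrac{i}{2}$, $d=-\tfrac{i}{2}$; then $\mathrm{Im}\langle n|b\rangle-\mathrm{Im}\langle d|a\rangle=-1$, so one may take $X=1=\|a\|\|b\|$, and the left side equals $\tfrac14+\tfrac14+0=\tfrac12<1=X^2$. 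Without an orthogonality hypothesis on the ``approximator'' vectors, no such inequality can hold; your commutator decomposition feeds naturally into the three-term relation \eqref{Ozawa} (bounding each bracket separately), but not into the tighter Branciard form.

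What the paper does instead is to bypass the commutator identity altogether and work directly in the real Hilbert space $(\mathcal{K},\mathrm{Re}\langle\cdot|\cdot\rangle)$. With $W$ as above it sets
\[
a=V\alpha\,JW\xi_\rho,\quad b=-iV\beta\,J\xi_\rho,\quad m=(E^{(1)}-\rho(A))V\,JW\xi_\rho,\quad n=-i(\pi(B)-\rho(B))V\,J\xi_\rho.
\]
The commutativity $[E^{(1)},\pi(B)]=0$ makes $WJV^*(E^{(1)}-\rho(A))(\pi(B)-\rho(B))VJ$ self-adjoint, so $\mathrm{Re}\langle m|n\rangle=0$; this is the crucial orthogonality that licenses the \emph{genuine} Branciard inequality (Proposition~\ref{Bra}). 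A direct computation then gives $\mathrm{Re}\langle a|b\rangle=D_{AB}$ exactly, while $\|a\|\le\sigma(A)$, $\|b\|=\sigma(B)$, $\|a-m\|\le\varepsilon(A)$, $\|b-n\|=\eta(B)$, the two inequalities coming from $W^2\le 1$. The key structural point you are missing is that the $JW$ (equivalently your $U'$) must be inserted \emph{asymmetrically}---into the $A$-side vectors $a,m$ only---so that the orthogonality $m\perp_\mathbb{R} n$ survives and $\langle a|b\rangle_\mathbb{R}$ still recovers $D_{AB}$; your symmetric ``four-vector'' lemma cannot substitute for this.
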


Branciard \cite{Bra13} proved that
the following relation, called Branciard's geometric inequality, holds.
\begin{proposition}{}\label{Bra}
Let $\mathcal{L}$ be a real vector space with real inner product $(\cdot,\cdot)$.
 For any vectors
$\mathbf{a},\mathbf{b},\mathbf{m},\mathbf{n}\in\mathcal{L}$ with $\mathbf{m}\;\bot\;\mathbf{n}$,
\begin{equation} \label{Branciard}
\Vert \mathbf{a}-\mathbf{m}\Vert^2 \Vert \mathbf{b}\Vert^2
+\Vert \mathbf{a}\Vert^2 \Vert \mathbf{b}-\mathbf{n}\Vert^2
+2\Vert \mathbf{a}-\mathbf{m}\Vert \Vert \mathbf{b}-\mathbf{n}\Vert
\sqrt{\Vert \mathbf{a}\Vert^2 \Vert \mathbf{b}\Vert^2-(\mathbf{a},\mathbf{b})^2}
\geq (\mathbf{a},\mathbf{b})^2.
\end{equation}
\end{proposition}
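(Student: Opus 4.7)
The plan is to reduce the stated inequality to an application of Branciard's geometric inequality (Proposition \ref{Bra}) in a real Hilbert space built from the minimal dilation $(\mathcal{K},E,\pi,V)$ of $\mathcal{I}$ together with the standard-form data $(\mathcal{P},J,\xi_\rho)$ associated with $\rho$. The boundedness assumption on $E^{(1)}$ and the vector-norm formulas at the end of Section \ref{3} identify the error and disturbance as $\varepsilon(A)^2=\|(E^{(1)}V-VA)\xi_\rho\|^2$ and $\eta(B)^2=\|(\pi(B)V-VB)\xi_\rho\|^2$, with $\xi_\rho\in\mathcal{P}$ the canonical representative of $\rho$.

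First I would introduce the mean-adjusted candidates $\mathbf{a}=V(A-\rho(A))\xi_\rho$ and $\mathbf{m}=(E^{(1)}-\rho(A))V\xi_\rho$, which immediately give $\|\mathbf{a}\|=\sigma(A)$ and $\|\mathbf{a}-\mathbf{m}\|=\varepsilon(A)$, together with analogous candidates $\mathbf{b},\mathbf{n}$ built from $B,\pi(B)$ with a $J$-twist on the $\mathbf{b}$-side. To feed Proposition \ref{Bra} I need (i) $\mathbf{m}\perp\mathbf{n}$ in the chosen real inner product and (ii) $(\mathbf{a},\mathbf{b})^2\geq D_{A,B,\rho}^2$. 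For (i) I would exploit $E(\Delta)\pi(M)=\pi(M)E(\Delta)$ together with the standard-form identity $J\mathcal{M}J=\mathcal{M}'$, passing to the real inner product $\mathrm{Re}\langle\cdot|\cdot\rangle$ on $\mathcal{K}$ (or to a doubled $\mathcal{K}\oplus\mathcal{K}$), so that the probe contribution to $\mathbf{m}$ and the $\pi(\mathcal{M})$-side contribution to $\mathbf{n}$ become geometrically independent.

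For (ii), I would use $\omega_{A,B,\rho}(M)=\langle\xi_\rho|MJ(-i[A,B])J\xi_\rho\rangle$ together with $J\xi_\rho=\xi_\rho$ and the self-adjointness of $JXJ$ for $X=-i[A,B]$ to rewrite the functional as $\omega_{A,B,\rho}(M)=\langle J(-i[A,B])\xi_\rho|M\xi_\rho\rangle$, then invoke the polar decomposition $\omega_{A,B,\rho}=u\cdot|\omega_{A,B,\rho}|$ in $\mathcal{M}_{\ast}$: there is a partial isometry $u\in\mathcal{M}$ with $\|u\|\leq 1$ and $\omega_{A,B,\rho}(u)=\|\omega_{A,B,\rho}\|$. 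Inserting $u$ into the construction of $\mathbf{a}$ or $\mathbf{b}$ then yields $(\mathbf{a},\mathbf{b})=\tfrac{1}{2}\,\omega_{A,B,\rho}(u)=D_{A,B,\rho}$. This is exactly the step at which the standard form unifies the two \emph{ad hoc} devices used by Ozawa in \cite{14EDR} for $\mathcal{M}=\textrm{\boldmath $B$}(\mathcal{H})$, namely the canonical purification via the dual Hilbert space and the Hilbert-Schmidt realization; both correspond to specific choices of the pair $(\xi_\rho,J(-i[A,B])J\xi_\rho)$, now abstractly available in any $\sigma$-finite von Neumann algebra, and the construction therefore applies to all CP instruments including those without the NEP.

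The main obstacle is the simultaneous realization of (i) and (ii) in one real Hilbert space: orthogonality pushes toward a $J$-twisted or tensor-product-like construction, whereas $D_{AB}=\tfrac{1}{2}\|\omega_{A,B,\rho}\|_{\mathcal{M}_{\ast}}$ is a predual norm without any a priori Hilbert-space factorization. The modular conjugation resolves both demands at once by producing a cyclic commutant action $\mathcal{M}'\xi_\rho$ that carries the full predual-norm information of $\omega_{A,B,\rho}$ while being geometrically decoupled from the spectral side of $\mathcal{K}$. Once the four vectors are in place, Proposition \ref{Bra} gives $\varepsilon(A)^2\sigma(B)^2+\sigma(A)^2\eta(B)^2+2\varepsilon(A)\eta(B)\sqrt{\sigma(A)^2\sigma(B)^2-(\mathbf{a},\mathbf{b})^2}\geq(\mathbf{a},\mathbf{b})^2$, and monotonicity of $s\mapsto\sqrt{\sigma(A)^2\sigma(B)^2-s^2}$ allows one to replace $(\mathbf{a},\mathbf{b})^2$ by the smaller $D_{AB}^2$ inside the square root, yielding the stated inequality; the Cauchy-Schwarz consequence $\sigma(A)\sigma(B)\geq D_{AB}$ of the standard-form structure ensures that this square root remains real.
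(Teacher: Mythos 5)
You were asked to prove Proposition \ref{Bra} itself: a purely elementary inequality about four vectors $\mathbf{a},\mathbf{b},\mathbf{m},\mathbf{n}$ in a real inner-product space with $\mathbf{m}\perp\mathbf{n}$. Your proposal does not prove it --- it presupposes it. From the first sentence on, you ``reduce the stated inequality to an application of Branciard's geometric inequality (Proposition \ref{Bra})'' and devote the whole argument to building vectors in the dilation space $\mathcal{K}$ out of the CP instrument, the standard form $(\mathcal{M},\mathcal{H},\mathcal{P},J)$ and the polar decomposition of $\omega_{A,B,\rho}$. That is an argument for Theorem \ref{EDUR}, not for Proposition \ref{Bra}; read as a proof of the proposition it is circular, and the operator-algebraic machinery is irrelevant, since the proposition mentions no von Neumann algebra, no instrument, no state --- only an abstract real inner product. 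For the record, the paper does not reprove the proposition either: it cites Branciard's supplementary material \cite[SI Text, section B]{Bra13}. A self-contained proof has to be an elementary geometric one: restrict to the finite-dimensional subspace spanned by $\mathbf{a},\mathbf{b},\mathbf{m},\mathbf{n}$, decompose $(\mathbf{a},\mathbf{b})$ using $\mathbf{a}=\mathbf{m}+(\mathbf{a}-\mathbf{m})$, $\mathbf{b}=\mathbf{n}+(\mathbf{b}-\mathbf{n})$ and $(\mathbf{m},\mathbf{n})=0$, and control the resulting terms by Cauchy--Schwarz together with an optimization over the admissible directions of $\mathbf{m},\mathbf{n}$, which is exactly where the square-root term $\sqrt{\Vert\mathbf{a}\Vert^2\Vert\mathbf{b}\Vert^2-(\mathbf{a},\mathbf{b})^2}$ comes from. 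Nothing of this kind appears in your text, so the statement remains unproved.

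What you did write is, in substance, the paper's own proof of Theorem \ref{EDUR} in Section \ref{4}: the real inner product $\mathrm{Re}\langle\cdot|\cdot\rangle$ on $\mathcal{K}$, the $J$-twisted vectors, and the partial isometry from the polar decomposition of $\omega_{A,B,\rho}$ to convert $\Vert\omega_{A,B,\rho}\Vert$ into an inner product, so the ideas are not wasted --- they just answer a different question. Even there, note a small imprecision: once the partial isometry $W$ is inserted (as it must be, to get $(\mathbf{a},\mathbf{b})_{\mathbb{R}}=D_{AB}$), the relations $\Vert\mathbf{a}\Vert=\sigma(A)$ and $\Vert\mathbf{a}-\mathbf{m}\Vert=\varepsilon(A)$ you claim become the inequalities $\Vert\mathbf{a}\Vert\le\sigma(A)$ and $\Vert\mathbf{a}-\mathbf{m}\Vert\le\varepsilon(A)$ (because $W^2\le 1$), which is precisely the form the paper uses to pass from Eq.~(\ref{Branciard}) to the final relation; with $(\mathbf{a},\mathbf{b})_{\mathbb{R}}=D_{AB}$ exact, no replacement of $(\mathbf{a},\mathbf{b})^2$ by $D_{AB}^2$ inside the square root is needed.
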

The proof is given in \cite[SI Text, section B]{Bra13}.
For every linear functional $\omega$ on $\mathcal{M}$, the adjoint functional $\omega^\ast$ of $\omega$
is defined by
\begin{equation}
\omega^\ast(M)=\overline{\omega(M^\ast)}
\end{equation}
for all $M\in\mathcal{M}$ \cite{Tak79}. We say that a linear functional $\omega$ on $\mathcal{M}$ is hermitian
if $\omega^\ast=\omega$. For every normal functional $\omega$ on $\mathcal{M}$,
there exist the smallest projections $E$ and $F$ such that
\begin{equation}
\langle \omega, M \rangle=\langle \omega, M E\rangle,
\hspace{5mm}\langle \omega, N \rangle=\langle \omega, FN \rangle
\end{equation}
for all $M,N\in\mathcal{M}$. The projections $E$ and $F$ are called, respectively,
the left and right support projections of $\omega$ and denoted by $S_l(\omega)$ and $S_r(\omega)$.
If $\omega$ is hermitian, then $S_r(\omega)=S_l(\omega)$, so that they are written as $S(\omega)$ \cite{Tak79}.
\begin{proposition}[\text{\cite[Chapter III, Theorem 4.2 and its proof]{Tak79}}]
Let $\mathcal{M}$ be a von Neumann algebra.
For every normal linear functional $\omega$ on $\mathcal{M}$,
there exist a partial isometry $V\in\mathcal{M}$ and
a positive normal linear functional $\varphi$ on $\mathcal{M}$ such that
\begin{equation}\label{polar}
\langle \omega,M \rangle=\langle \varphi,MV \rangle
\hspace{5mm}V^\ast V=S(\varphi).
\end{equation}
Furthermore, we have $S(\varphi)=S_r(\omega)$ and $VV^\ast=S_l(\omega)$.
If $\omega$ is hermitian, $V$ is self-adjoint.
\end{proposition}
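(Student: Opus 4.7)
The plan is to establish the polar decomposition along the classical Sakai--Takesaki lines, whose decisive analytic ingredient is that $\omega$ attains its norm on the closed unit ball of $\mathcal{M}$. First I would normalize to $\|\omega\|=1$. Since $\mathcal{M}$ is the Banach dual of the predual $\mathcal{M}_*$, the unit ball $\mathcal{M}_1$ is weak-$*$ compact by Banach--Alaoglu, while the evaluation $M\mapsto\omega(M)$ is weak-$*$ continuous by virtue of $\omega\in\mathcal{M}_*$. Hence $|\omega|$ attains its supremum on $\mathcal{M}_1$, and after a unimodular rescaling we obtain $W\in\mathcal{M}_1$ with $\omega(W)=1$.

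Second, I would extract the polar components. The partial isometry $V$ is defined as the phase in the polar decomposition of $W^*$ inside $\mathcal{M}$, which is available since $\mathcal{M}$ is a von Neumann algebra and hence closed under polar decomposition; explicitly, write $W^*=VH$ with $H=(WW^*)^{1/2}$, $V\in\mathcal{M}$ a partial isometry, and $V^*V$ equal to the range projection of $H$. Set $\varphi(M):=\omega(MV^*)$, which is normal as a composition of $\omega$ with the ultraweakly continuous right-multiplication $M\mapsto MV^*$. The core step is to prove $\varphi(1)=\|\varphi\|=1$ and $\omega(M)=\varphi(MV)$ simultaneously, together with the support identifications $V^*V=S(\varphi)=S_r(\omega)$ and $VV^*=S_l(\omega)$. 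The mechanism is a variational argument: maximality of $\mathrm{Re}\,\omega(W)$ over $\mathcal{M}_1$ implies, by first-order perturbation of $W$ by elements $N\in\mathcal{M}$ appropriately orthogonal to the range of $W$ on each side, that $\omega$ vanishes on the one-sided ideals cut out by the left and right range projections of $W$. Once $\varphi(1)=\|\varphi\|$ is established, the standard fact that a bounded linear functional $\psi$ on a unital $C^*$-algebra with $\psi(1)=\|\psi\|$ is automatically positive yields $\varphi\ge 0$, and $S(\varphi)=V^*V$ follows by tracking supports through the definition of $\varphi$.

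The hermitian case follows from uniqueness. Applying the involution to $\omega(M)=\varphi(MV)$ and using $\varphi^*=\varphi$ (because $\varphi\ge 0$) gives $\omega^*(M)=\varphi(V^*M)$. When $\omega=\omega^*$, this is a second polar-type representation of the same functional, and uniqueness of $(\varphi,V)$ (a standard bookkeeping argument on the support projections) forces $V=V^*$. The main obstacle I anticipate is the variational step in the second paragraph: to justify that $W+tN$ remains in $\mathcal{M}_1$ to first order in $t$ for the relevant perturbations $N$, one must control $\|W+tN\|$ via a second-order expansion using the Borel functional calculus on $W^*W$, $WW^*$, and the cross terms $W^*N+N^*W$. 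Once this positivity step is in hand, the support identifications, the polar formula $\omega(M)=\varphi(MV)$, and the hermitian reduction are all bookkeeping consequences.
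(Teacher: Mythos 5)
First, a point of reference: the paper does not prove this proposition at all---it is imported verbatim from Takesaki, Chapter III, Theorem 4.2 (the citation explicitly includes ``and its proof'' for the hermitian clause), so there is no in-paper argument to compare against. Your outline is, in substance, a reconstruction of that standard Sakai--Takesaki proof: norm attainment on the weak-$\ast$ compact unit ball, polar decomposition of the norming element $W$, positivity of the shifted functional, and support bookkeeping. The skeleton is correct.

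Two refinements would remove the obstacle you flag at the end. For the positivity of $\varphi(M)=\omega(MV^\ast)$ you do not need any perturbation of $W$: writing $W^\ast=VH$ with $H=(WW^\ast)^{1/2}$ gives $W=HV^\ast$, hence $\varphi(H)=\omega(W)=1$ with $0\le H\le 1$ and $\Vert\varphi\Vert\le 1$; since $\Vert H+e^{i\theta}(1-H)\Vert\le 1$ for every $\theta$ by the spectral mapping theorem, the estimate $|1+e^{i\theta}\varphi(1-H)|\le 1$ forces $\varphi(1-H)=0$, so $\varphi(1)=\Vert\varphi\Vert=1$ and $\varphi\ge 0$ by the standard fact you quote. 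The variational argument is what you need for the identity $\omega(M)=\varphi(MV)$, i.e.\ for $\omega(M(1-VV^\ast))=0$, and there it closes exactly rather than merely ``to first order'': with $q=1-VV^\ast$ one has $Wq=0$, hence $\Vert W+tMq\Vert^2=\Vert WW^\ast+t^2MqM^\ast\Vert\le 1+t^2$ because the cross terms vanish identically, and letting $t\to 0^+$ after a phase rotation of $M$ kills $\omega(Mq)$; the same device yields the support identifications. Finally, the hermitian clause genuinely rests on the uniqueness part of Takesaki's theorem, which you invoke but do not establish; note also that $M\mapsto\varphi(V^\ast M)$ is a \emph{left} polar form, so before applying uniqueness you must rewrite $\omega^\ast$ in the right form, $\omega^\ast(M)=\varphi'(MV^\ast)$ with $\varphi'=\varphi(V^\ast\,\cdot\,V)$, and only then conclude $V=V^\ast$ and $\varphi=\varphi'$. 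These are all standard repairs; as a blind reconstruction of the cited proof, the proposal is faithful and essentially complete.
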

The expression of $\omega$ in Eq.(\ref{polar}) is called the polar decomposition of $\omega$
and $\varphi$ is called the absolute value of $\omega$ and denoted by $|\omega|$.
We use the above propositions to prove Theorem \ref{EDUR}.
\begin{proof}[Proof of Theorem \ref{EDUR}]
The norm $\Vert \omega_{A,B,\rho}\Vert$ of $\omega_{A,B,\rho}$ satisfies the equality
\begin{equation}
\Vert \omega_{A,B,\rho}\Vert=\langle  |\omega_{A,B,\rho}|,1\rangle.
\end{equation}
Since $\omega_{A,B,\rho}$ is hermitian,
there exists a self-adjoint partial isometry $W\in\mathcal{M}$ such that
\begin{equation}
\langle\omega_{A,B,\rho},M\rangle=\langle|\omega_{A,B,\rho}|,MW\rangle
\end{equation}
for all $M\in\mathcal{M}$, and that $W^\ast W=S(|\omega_{A,B,\rho}|)$.
Thus we have
\begin{equation}
\Vert \omega_{A,B,\rho}\Vert=\langle  \omega_{A,B,\rho},W\rangle=\langle \xi_\rho|WJ(-i[A,B])J\xi_\rho \rangle.
\end{equation}

To use Eq. (\ref{Branciard}), we define
a real inner product $\langle \cdot|\cdot \rangle_\mathbb{R}$ on $\mathcal{K}$ by
\begin{equation}
\langle x|y \rangle_\mathbb{R} =\mathrm{Re}\langle x|y \rangle
\end{equation}
for all $x,y\in\mathcal{K}$,
and put
\begin{align}
a &= V(A-\rho(A))JW\xi_\rho,\\
b &= -iV(B-\rho(B))J\xi_\rho,\\
m &= (E^{(1)}-\rho(A))VJW\xi_\rho,\\
n &= -i(\pi(B)-\rho(B))VJ\xi_\rho.
\end{align}
Let $\Vert\cdot\Vert_\mathbb{R}$ denote the norm of $\mathcal{K}$ as a real Hilbert space
induced by $\langle \cdot|\cdot \rangle_\mathbb{R}$.
Since $WJV^\ast(E^{(1)}-\rho(A))(\pi(B)-\rho(B))VJ$ is self-adjoint,
$\langle \xi_\rho|WJV^\ast(E^{(1)}-\rho(A))(\pi(B)-\rho(B))VJ\xi_\rho \rangle$ is real.
Thus we have
\begin{align}
\langle m|n \rangle_\mathbb{R} &=\mathrm{Re}\langle (E^{(1)}-\rho(A))VJW\xi_\rho
|-i(\pi(B)-\rho(B))VJ\xi_\rho \rangle \nonumber\\
 &= \mathrm{Re}\; (-i) \langle \xi_\rho|WJV^\ast(E^{(1)}-\rho(A))(\pi(B)-\rho(B))VJ\xi_\rho \rangle=0.
\end{align}
Next, $\langle a|b \rangle_\mathbb{R}$ and $\Vert a\Vert_\mathbb{R}^2$, respectively,
satisfy the following relations:
\begin{align}
\langle a|b \rangle_\mathbb{R} &=\mathrm{Re}\langle V(A-\rho(A))JW\xi_\rho
|-iV(B-\rho(B))J\xi_\rho \rangle \nonumber\\
 &= \mathrm{Re}\langle (A-\rho(A))JW\xi_\rho|-i(B-\rho(B))J\xi_\rho \rangle\nonumber\\
 &= \frac{\langle (A-\rho(A))JW\xi_\rho|-i(B-\rho(B))J\xi_\rho \rangle
 +\langle -i(B-\rho(B))J\xi_\rho|(A-\rho(A))JW\xi_\rho \rangle}{2} \nonumber\\
 &= \frac{\langle J(-i)(A-\rho(A))(B-\rho(B))JW\xi_\rho|\xi_\rho \rangle
 +\langle Ji(B-\rho(B))(A-\rho(A))JW\xi_\rho|\xi_\rho \rangle}{2}\nonumber\\
 &=\frac{1}{2}\langle \xi_\rho|WJ(-i[A,B])J\xi_\rho \rangle=\frac{1}{2}\Vert\omega_{A,B,\rho}\Vert
 =D_{AB}.
\end{align}
\begin{align}
\Vert a\Vert_\mathbb{R}^2=\Vert a\Vert^2 &=\Vert V(A-\rho(A))JW\xi_\rho\Vert^2 \nonumber\\
 &= \langle \xi_\rho|J(A-\rho(A))^2JW^2\xi_\rho\rangle \nonumber\\
 &\leq \langle\xi_\rho|J(A-\rho(A))^2J\xi_\rho\rangle= \rho((A-\rho(A))^2)=\sigma(A)^2.
\end{align}
Since $\Pi_\mathcal{I}^{(1)}=V^\ast E^{(1)}V$ and $\Pi_\mathcal{I}^{(2)}=V^\ast E^{(2)}V$
are bounded by assumption, we have
\begin{align}
\Vert a-m\Vert_\mathbb{R}^2=\Vert a-m\Vert^2 &=\Vert (VA-E^{(1)}V)JW\xi_\rho\Vert^2 \nonumber\\
 &= \langle \xi_\rho|J(\Pi_\mathcal{I}^{(2)}-A\Pi_\mathcal{I}^{(1)}-\Pi_\mathcal{I}^{(1)}A+A^2)J
 W^2\xi_\rho\rangle \nonumber\\
 &\leq \langle \xi_\rho|
 J(\Pi_\mathcal{I}^{(2)}-A\Pi_\mathcal{I}^{(1)}-\Pi_\mathcal{I}^{(1)}A+A^2)J\xi_\rho\rangle\nonumber\\
 &=\rho(\Pi_\mathcal{I}^{(2)}-A\Pi_\mathcal{I}^{(1)}-\Pi_\mathcal{I}^{(1)}A+A^2)=\varepsilon(A)^2.
\end{align}
Lastly, we have
\begin{align}
\Vert b\Vert_\mathbb{R}^2&=\Vert b\Vert^2=\Vert -iV(B-\rho(B))J\xi_\rho\Vert^2=\sigma(B)^2,\\
\Vert b-n\Vert_\mathbb{R}^2&=\Vert b-n\Vert^2=\Vert -i(VB-\pi(B)V)J\xi_\rho\Vert^2=\eta(B)^2.
\end{align}
Therefore, Eq. (\ref{Branciard}) implies
\begin{align}
 &\; \hspace{4mm}\varepsilon(A)^2\sigma(B)^2+\sigma(A)^2\eta(B)^2+2\varepsilon(A)\eta(B)
\sqrt{\sigma(A)^2\sigma(B)^2-D_{AB}^2}\nonumber\\
 &\geq  \Vert (VA-E^{(1)}V)JW\xi_\rho\Vert^2\sigma(B)^2+
 \Vert V(A-\rho(A))JW\xi_\rho\Vert^2\eta(B)^2 \nonumber\\
 &\;\hspace{3mm}+2\Vert (VA-E^{(1)}V)JW\xi_\rho\Vert \eta(B)
\sqrt{\Vert V(A-\rho(A))JW\xi_\rho\Vert^2 \sigma(B)^2-D_{AB}^2}\geq D_{AB}^2,
\end{align}
which completes the proof of Theorem \ref{EDUR}.
\end{proof}

\section{Uncertainty Relation for Simultaneous Measurement}\label{5}
We shall define errors of $A$ and $B$ for simultaneous measurements of $A$ and $B$ in this section.
Let $\mathcal{I}$ be a CP instrument for $(\mathcal{M},\mathbb{R}^2)$
and $(\mathcal{K},\pi,E,V)$ the minimal dilation of $\mathcal{I}$.
We define spectral measures $E_{\bf x}$, $E_{\bf y}$ on $\mathcal{K}$ by
\begin{align}
E_{\bf x}(\Delta) &= E(\Delta\times\mathbb{R}),\hspace{5mm}\Delta\in\mathcal{B}(\mathbb{R}),\\
E_{\bf y}(\Gamma) &= E(\mathbb{R}\times\Gamma),\hspace{5mm}\Gamma\in\mathcal{B}(\mathbb{R}),
\end{align}
respectively.
It is natural (owing to the discussion in Section \ref{2}) to consider
that $\mathcal{I}$ corresponds to a measuring apparatus with two output variables ${\bf x}$ and ${\bf y}$,
both of which take values in $(\mathbb{R},\mathcal{B}(\mathbb{R}))$.
We then assume that we use ${\bf x}$ and ${\bf y}$ to measure $A$ and $B$, respectively.
Moreover, we assume that $E_{\bf x}^{(1)}$ and $E_{\bf y}^{(1)}$ are bounded operators on
$\mathcal{K}$ for simplicity, so that
$E_{\bf x}^{(2)}=(E_{\bf x}^{(1)})^2$ and $E_{\bf y}^{(2)}=(E_{\bf y}^{(1)})^2$.
We can naturally define errors of $A$ and $B$ in terms of $\mathcal{I}$ by
\begin{align}
\varepsilon(A,\rho;\mathcal{I}) &=
\langle \rho,V^\ast E_{\bf x}^{(2)}V-AV^\ast E_{\bf x}^{(1)}V
-V^\ast E_{\bf x}^{(1)}VA+A^2 \rangle^{\frac{1}{2}}\\
&=\langle \rho,(E^{(1)}_{\bf x}V-VA)^\ast (E^{(1)}_{\bf x}V-VA)\rangle^{\frac{1}{2}},\\
\varepsilon(B,\rho;\mathcal{I}) &=
\langle \rho, V^\ast E_{\bf y}^{(2)}V-BV^\ast E_{\bf y}^{(1)}V
-V^\ast E_{\bf y}^{(1)}VB+B^2\rangle^{\frac{1}{2}}\\
&=\langle \rho,(E^{(1)}_{\bf y}V-VB)^\ast (E^{(1)}_{\bf y}V-VB) \rangle^{\frac{1}{2}},
\end{align}
respectively.
If $\mathcal{I}$ is realized by a measuring process $(\mathcal{K},\sigma,F,U)$ for $(\mathcal{M},\mathbb{R}^2)$,
we have
\begin{align}
\varepsilon(A,\rho;\mathcal{I}) &=\langle 
\tilde{\rho}\otimes\sigma,(U^\ast(1\otimes F^{(1)}_{\bf x})U-A\otimes1)^2 \rangle^{\frac{1}{2}},\\
\varepsilon(B,\rho;\mathcal{I}) &=\langle 
\tilde{\rho}\otimes\sigma,(U^\ast(1\otimes F^{(1)}_{\bf y})U-B\otimes1)^2 \rangle^{\frac{1}{2}},
\end{align}
respectively,
where $\tilde{\rho}$ is a normal state on $\textrm{\boldmath $B$}(\mathcal{H})$
such that $\rho(M)=\tilde{\rho}(M)$ for all $M\in\mathcal{M}$,
and $F_{\bf x}$ and $F_{\bf y}$ are spectral measures on $\mathcal{K}$ defined by
$F_{\bf x}(\Delta)= F(\Delta\times\mathbb{R})$ for all $\Delta\in\mathcal{B}(\mathbb{R})$
and $F_{\bf y}(\Gamma) = F(\mathbb{R}\times\Gamma)$ for all $\Gamma\in\mathcal{B}(\mathbb{R})$,
respectively.

We have the following uncertainty relation for simultaneous measurements of two different observables.
Let $(\mathcal{M},\mathcal{H},\mathcal{P},J)$ be a standard form
with a $\sigma$-finite von Neumann algebra $\mathcal{M}$.
\begin{theorem}[]\label{EEUR}
Let $A,B$ be elements of $\mathcal{M}_{s.a.}$, $\rho\in\mathcal{S}_n(\mathcal{M})$,
$\mathcal{I}$ a CP instrument for $(\mathcal{M},\mathbb{R}^2)$ and
$(\mathcal{K},E,\pi,V)$ the minimal dilation of $\mathcal{I}$.
Assume that $E_{\bf x}^{(1)}$ and $E_{\bf y}^{(1)}$ are bounded operators on $\mathcal{K}$. Then we have
\begin{equation}
\varepsilon(A)^2\sigma(B)^2+\sigma(A)^2\varepsilon(B)^2+2\varepsilon(A)\varepsilon(B)
\sqrt{\sigma(A)^2\sigma(B)^2-D_{AB}^2}
\geq D_{AB}^2.
\end{equation}
\end{theorem}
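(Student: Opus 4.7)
The plan is to mimic the proof of Theorem \ref{EDUR} with only minor modifications, exploiting the fact that the simultaneous-measurement setting replaces the post-measurement observable $\pi(B)$ on $\mathcal{K}$ by the first moment $E^{(1)}_{\mathbf{y}}$ of the second marginal spectral measure. Most of the computations transfer verbatim, so the real work reduces to one new ingredient: in place of the commutation $E(\Delta)\pi(M) = \pi(M)E(\Delta)$ built into the minimal dilation, we will use the commutation $[E_{\mathbf{x}}(\Delta), E_{\mathbf{y}}(\Gamma)] = 0$, which is automatic from the fact that $E_{\mathbf{x}}$ and $E_{\mathbf{y}}$ are marginals of the same spectral measure $E$ on $\mathcal{B}(\mathbb{R}^2)$.

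Concretely, I would take the self-adjoint partial isometry $W \in \mathcal{M}$ from the polar decomposition of $\omega_{A,B,\rho}$ as in the proof of Theorem \ref{EDUR}, equip $\mathcal{K}$ with the real inner product $\langle \cdot|\cdot\rangle_\mathbb{R} = \mathrm{Re}\langle \cdot|\cdot\rangle$, and set
\begin{align*}
a &= V(A-\rho(A))JW\xi_\rho, \\
b &= -iV(B-\rho(B))J\xi_\rho, \\
m &= (E^{(1)}_{\mathbf{x}}-\rho(A))VJW\xi_\rho, \\
n &= -i(E^{(1)}_{\mathbf{y}}-\rho(B))VJ\xi_\rho.
\end{align*}
The relations $\|a\|_\mathbb{R}^2 \leq \sigma(A)^2$, $\|b\|_\mathbb{R}^2 = \sigma(B)^2$, $\|a-m\|_\mathbb{R}^2 \leq \varepsilon(A)^2$, and $\langle a|b\rangle_\mathbb{R} = D_{AB}$ transfer from Theorem \ref{EDUR} word for word after replacing $E^{(1)}$ by $E^{(1)}_{\mathbf{x}}$. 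The new identity $\|b-n\|_\mathbb{R}^2 = \varepsilon(B)^2$ is even cleaner: from $b - n = -i(VB - E^{(1)}_{\mathbf{y}}V)J\xi_\rho$ and $J\xi_\rho = \xi_\rho$, no $W$-contraction is needed and one gets equality with the defining expression for $\varepsilon(B,\rho;\mathcal{I})^2$.

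The one step that requires new work is the orthogonality $\langle m|n\rangle_\mathbb{R} = 0$. A direct computation yields
\begin{equation*}
\langle m|n\rangle_\mathbb{R} = \mathrm{Re}\bigl(-i\langle \xi_\rho|WJV^\ast(E^{(1)}_{\mathbf{x}}-\rho(A))(E^{(1)}_{\mathbf{y}}-\rho(B))VJ\xi_\rho\rangle\bigr),
\end{equation*}
so it suffices to show that $T := V^\ast(E^{(1)}_{\mathbf{x}}-\rho(A))(E^{(1)}_{\mathbf{y}}-\rho(B))V$ is a self-adjoint element of $\mathcal{M}$; granted this, $JTJ \in \mathcal{M}'$ commutes with $W \in \mathcal{M}$, so $WJTJ$ is self-adjoint, $\langle \xi_\rho|WJTJ\xi_\rho\rangle$ is real, and the real part of $-i$ times a real number vanishes. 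Self-adjointness of $T$ follows because $[E^{(1)}_{\mathbf{x}}, E^{(1)}_{\mathbf{y}}] = 0$ makes the middle factor self-adjoint. Membership of $T$ in $\mathcal{M}$ must be checked term by term in the expansion: the cross term $V^\ast E^{(1)}_{\mathbf{x}}E^{(1)}_{\mathbf{y}}V$ equals $\int xy\, d\Pi_\mathcal{I}(x,y)$, which is bounded by our hypothesis and lies in $\mathcal{M}$ since $\Pi_\mathcal{I}$ is $\mathcal{M}$-valued, while the remaining terms involve the marginal moments $\Pi^{(1)}_{\mathcal{I},\mathbf{x}}, \Pi^{(1)}_{\mathcal{I},\mathbf{y}} \in \mathcal{M}$ and scalars. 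Plugging $a, b, m, n$ into Branciard's geometric inequality (Proposition \ref{Bra}) then delivers the theorem. The main technical hazard is the verification that $T \in \mathcal{M}$; once that is secured, the rest is a transparent adaptation of the proof of Theorem \ref{EDUR}.
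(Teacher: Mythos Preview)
Your proposal is correct and follows the paper's own proof essentially verbatim: the paper chooses the same four vectors $a,b,m,n$ in $(\mathcal{K},\langle\cdot|\cdot\rangle_\mathbb{R})$, asserts the same six relations, and applies Branciard's inequality. Your write-up is in fact more complete than the paper's very terse proof, since you spell out why $\langle m|n\rangle_\mathbb{R}=0$ via the commutativity of the marginals $E_{\mathbf{x}},E_{\mathbf{y}}$ and why $T=V^\ast(E^{(1)}_{\mathbf{x}}-\rho(A))(E^{(1)}_{\mathbf{y}}-\rho(B))V$ lies in $\mathcal{M}$, points the paper leaves to the reader.
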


\begin{proof}
We define a real inner product $\langle \cdot|\cdot \rangle_\mathbb{R}$ on $\mathcal{K}$ by
\begin{equation}
\langle x|y \rangle_\mathbb{R} =\mathrm{Re}\langle x|y \rangle
\end{equation}
for all $x,y\in\mathcal{K}$,
and put
\begin{align}
a &= V(A-\rho(A))JW\xi_\rho,\\
b &= -iV(B-\rho(B))J\xi_\rho,\\
m &= (E_{\bf x}^{(1)}-\rho(A))VJW\xi_\rho,\\
n &= -i(E_{\bf y}^{(1)}-\rho(B))VJ\xi_\rho.
\end{align}
Then we have
$\langle m|n \rangle_\mathbb{R} = 0$,
$\langle a|b \rangle_\mathbb{R} =D_{AB}$, 
$\Vert a\Vert_\mathbb{R}^2 \leq \sigma(A)^2$,
$\Vert a-m\Vert_\mathbb{R}^2 \leq \varepsilon(A)^2$,
$\Vert b\Vert_\mathbb{R}^2 = \sigma(B)^2$ and
$\Vert b-n\Vert_\mathbb{R}^2=\varepsilon(B)^2$.
By Eq. (\ref{Branciard}), we get the desired inequality.
\end{proof}

\section{Concluding Remarks}\label{6}
Recently, Busch, Lahti, and Werner \cite{BLW14RMP} raised 
a reliability problem for quantum generalizations of the classical 
root-mean-square (rms) error, comparing the noise-operator based error, 
which we adopted here, 
with the Wasserstein 2-distance, another error measure 
based on the distance between probability measures.
They pointed out several discrepancies between those two error measures
in favor of the latter, and claimed that a state-dependent formulation 
for measurement uncertainty relations is not tenable. 

In order to resolve the conflict, one of the authors \cite{18PPT}
introduced the following requirements for 
any sensible error measure generalizing the classical rms error:

(I) The operational definability: The error measure be defined by the POVM of the measurement,
the observable to be measured, and the state of the system to be measured.

(II) The correspondence principle: The error measure coincide with the classical rms
error in the case where there exists the joint probability distribution 
for the observable to be measured
just before the measurement and the meter observable just after the measurement.

(III) The soundness: The error measure take the value 0 if the measurement 
is precise in the sense that the observable to be measured 
just before the measurement and the meter observable just after the measurement are 
perfectly correlated \cite{05PCN,06QPC}.

It was shown that the noise-operator based error satisfies all the requirements, (I)--(III),
whereas the Wasserstein 2-distance does not satisfy (II).  
Thus, the Busch-Lahti-Werner criticism based on the comparison with
the Wasserstein 2-distance is not relevant, and their opinion against 
the state-dependent formulation is unfounded.   
In fact, in the classical case, where the observable algebra $\mathcal{M}$ is abelian,  
the noise-operator-based error takes the value 0 if and only if 
the measured observable just before the measurement and the meter observable just after the measurement take the same value with probability 1, 
whereas the Wasserstein 2-distance takes the value 0 if and only if they only have 
the same probability distribution.  For more detailed discussions, we refer the reader to 
Ref.~\cite{18PPT}. 

In contrast to the violation of Heisenberg's relation \eq{Hei27} for the noise-operator-based error $\ep$, a state-dependent error measure, 
Appleby \cite{App98c} considered the state-independent error $\ep_{\rm Appleby}$ 
defined by
\begin{equation}
\ep_{{\rm Appleby}}(A;\mathcal{I})=\sup_{\rho}\ep(A,\rho;\mathcal{I}),
\end{equation}
where the supremum is taken over all pure states $\rho$,
and derived the relation
\begin{equation}
\ep_{{\rm Appleby}}(Q;\mathcal{I})\,\ep_{{\rm Appleby}}(P;\mathcal{I})\ge\frac{\hbar}{2}
\end{equation}
for canonically conjugate observables $Q,P$ and for any joint measurement 
$\mathcal{I}$ of $Q,P$,
except for the case where $\ep_{{\rm Appleby}}(Q;\mathcal{I})= 0$ or 
$\ep_{{\rm Appleby}}(P;\mathcal{I})=0$.
It should be noted that in the state-independent formulation as above 
the error measures
$\ep_{\rm Appleby}(Q;\mathcal{I})$ and $\ep_{\rm Appleby}(P;\mathcal{I})$
often diverge. In fact, they diverges for almost every linear measurement
 \cite{13DHE}.
Even in the original $\gamma$-ray thought experiment, the error measure
$\ep_{\rm Appleby}(Q;\mathcal{I})$ obviously diverges.
The notion of the resolution power of a microscope is well-defined only in the case
where the object is well-localized in the scope of the microscope, and it cannot be
captured by the state-independent formulation.
Recently, Busch-Lahti-Werner \cite{BLW13,BLW14JMP} revived the state-independent 
formulation similar to Appleby's \cite{App98c}, but the same criticisms 
apply to their approach. In fact, the Busch-Lahti-Werner formulation in \cite{BLW13}
is equivalent to Appleby's formulation \cite{App98c}
for any linear measurements \cite{13DHE}.
For detailed discussions, we refer the reader to Ref.~\cite{13DHE}. 

\section*{Acknowledgment}
The authors acknowledge the supports of the JSPS KAKENHI, No.~26247016, No.~17K19970,  and of the IRI-NU collaboration.
KO is supported by the JSPS KAKENHI, No. 16K17641, No. 17H01277, 
by Research Origin for Dressed Photon,
by Grant for Basic Science Research Projects from The Sumitomo Foundation
and by the JST CREST, Grant Number JPMJCR17N2, Japan.

\end{document}